  \providecommand\BibTeX{{%
    \normalfont B\kern-0.5em{\scshape i\kern-0.25em b}\kern-0.8em\TeX}}}
\newtheorem{rem}{remark}
\begin{document}

\title{On Efficient Approximation of the Maximum Distance to A Point Over an Intersection of Balls}

\author{Beniamin Costandin}
\email{bcostandin@yahoo.com}
\affiliation{ \institution{Technical University of Cluj Napoca} \country{Romania}}

\author{Marius Costandin}
\email{costandinmarius@gmail.com}
\affiliation{ \institution{General Digits} \country{Romania}}

\begin{abstract}
In this paper we study the NP-Hard problem of maximizing the distance over an intersection of balls to a given point. We expand the results found in \cite{funcos1}, where the authors characterize the farthest in an intersection of balls $\mathcal{Q}$ to the given point $C_0$ by constructing some intersection of halfspaces. In this paper, by slightly modifying the technique found in literature, we characterize the farthest in an intersection of balls $\mathcal{Q}$ with another intersection of balls $\mathcal{Q}_1$. As such, going backwards, we are naturally able to find the given intersection of balls $\mathcal{Q}$ as the max indicator intersection of balls of another one $\mathcal{Q}_{-1}$. By repeating the process, we find a sequence of intersection of balls $(\mathcal{Q}_{i})_{i \in \mathbb{Z}}$, which has $\mathcal{Q}$ as an element, namely $\mathcal{Q}_{0}$ and show that $\mathcal{Q}_{-\infty} = \mathcal{B}(C_0,R_0)$ where $R_0$ is the maximum distance from $C_0$ to a point in $\mathcal{Q}$. As a final application of the proposed theory we give a polynomial algorithm for computing the maximum distance under an oracle which returns the volume of an intersection of balls, showing that the later is NP-Hard. Finally, we present a randomized method 
which allows an approximation of the maximum distance.
\end{abstract}

%

\keywords{non-convex optimization, NP-Hard, numerical approximation}


\maketitle

\section{Introduction}

In modern times, the systematic investigation of the geometry of the intersection of congruent balls (that is balls with equal radius) was started with the paper \cite{ballPoly1}. There are three books that survey some particular parts of the literature dealing with such intersections: \cite{bookballPoly1},  \cite{bookballPoly2}, and \cite{bookballPoly3}. For more general references, perhaps it is worth choosing from there.

In this paper we shall study the problem of maximizing the distance to a given point $C_0$ over an intersection of balls. We allow the balls to have arbitrary radius, but we still call sometimes their intersection as "ball-polyedra" in absence of a better term. This is a NP-Hard problem in general, although it allows a polynomial algorithm for some particular classes, as shown in \cite{funcos1} namely for the cases where $C_0$, the given point, is outside of the convex hull of the balls centers. 

However, to this date, the authors are not aware of any method which generally solves the problem if the point $C_0$ belongs to the convex hull of the balls centers. Very shortly, we show in the following that the Subset Sum Problem can be written as such a distance maximization problem, making this problem NP-Hard.

Indeed briefly, as presented in \cite{funcos1} let $n \in \mathbb{N}$ and consider $S \in \mathbb{R}^n$ and $T \in \mathbb{R}$. The associated subset sum problem, SSP(S,T) asks if exists $x \in \{0,1\}^{n}$ such that $x^T\cdot S = T$. For this, similar to \cite{sahni}, consider the optimization problem for $\beta > 0$:
\begin{align}\label{E3.13a}
& \max x^T\cdot(x - 1_{n \times 1}) + \beta \cdot S^T \cdot x \nonumber \\
& \hspace{0.5cm} \text{s.t} \ \ \   x \in \begin{cases} S^T\cdot x \leq T\\
0 \leq x_i \leq 1 \hspace{0.3cm} \forall i \in \{1, \hdots, n\}
\end{cases} 
\end{align} Let the feasible set be denoted by $\mathcal{P} = \{x \in \mathbb{R}^n| S^T\cdot x \leq T, 
0 \leq x_i \leq 1 \hspace{0.3cm} \forall i \in \{1, \hdots, n\} \}$. 
\begin{rem}\label{R1}
It is easy to see that the objective function is always smaller than or equal to $\beta \cdot T$. In fact the objective function reaches the value $\beta \cdot T$ if and only if the SSP(S,T) has a solution. 
\end{rem}

Note that the objective function can be rewritten as
\begin{align}\label{E2b}
&x^T\cdot x + \left(\beta \cdot S - 1_{n \times 1} \right)^T \cdot x = \nonumber \\
&= \left\| x - \frac{1_{n\times 1} - \beta \cdot S}{2} \right\|^2 - \left\| \frac{1_{n\times 1} - \beta \cdot S}{2} \right\|^2 \nonumber \\
& = \|x - C_0\|^2 - \|C_0\|^2
\end{align} with obvious definition for $C_0$. Since $C_0$ does not depend on $x$, we shall consider the optimization problem:
\begin{align}\label{E3}
\max_{x \in \mathcal{P}} \|x - C_0\|^2
\end{align}  Using Remark \ref{R1}, note that the SSP has a solution iff (\ref{E2b}) is zero, that is the mximum distance in (\ref{E3}) is $\|C_0\|^2$. The problem (\ref{E3}) is a distance maximization over a polytope.  Indeed $\mathcal{P}$ is the intersection of the unit hypercube with the halfspace $\{ x | S^T\cdot x \leq T\}$. Any maximizer shall be located in a corner of the polytope $\mathcal{P}$. Next, in \cite{funcos1} the polytope $\mathcal{P}$ is replaced with an intersection of balls which preserves the corners of the unit-hypercube: each hyperplane is replaced by a ball who's boundary leaves the same imprint on the boundary of $\mathcal{B}\left(\frac{1}{2}\cdot 1_{n \times 1}, \frac{\sqrt{n}}{2}\right)$ as the hyperplane defining the polytopes $\mathcal{P}$ facet. It is proven that if the SSP has a solution then this is also the maximizer of the maximum distance to $C_0$ over the proposed intersection of balls. From  \cite{funcos1}, it is worth noting that the centers of the balls in the constructed intersection of balls are required to have $C_0$ in their convex hull. 

It is worth noting that throughout this paper we shall denote by $\mathcal{B}(C,R)$ the open ball centered at $C\in \mathbb{R}^n$ with radius $R > 0$ and with $\bar{\mathcal{B}}(C,R)$ the closed ball centered at $C\in \mathbb{R}^n$ with radius $R > 0$. We also denote by $1_{n\times 1}$ the vector in $\mathbb{R}^n$ where all entries are $1$ and sometimes we refer to $\mathbb{R}^n$ as $\mathbb{R}^{n \times 1}$ for $n \in \mathbb{N}$. 

\section{Main Results}
This section contains several subsection through which the theory is presented. We start with a characterization of the farthest in an intersection of balls $\mathcal{Q}$ to a given point $C_0$. 
\subsection{Geometry Results: Characterization of Maximizers Over Intersection of Balls}

Let $m,n \in \mathbb{N}$ and $\bar{\mathcal{B}}(C_k, r_k) \subseteq \mathbb{R}^{n}$ be given closed balls for $k \in \{1, \hdots, m\}$. Let $\mathcal{Q}$ denote their intersection
\begin{align}
\mathcal{Q} = \bigcap_{k = 1}^m \bar{\mathcal{B}}(C_k, r_k)
\end{align} 

For a given $C_0 \in \mathbb{R}^{n }$ consider the problem:
\begin{align}\label{E2a}
R^{2}_0 = \max_{x \in \mathcal{Q}} \|x - C_0\|^2
\end{align}

In order to study the problem given by (\ref{E2a}) form the functions:
\begin{align} \label{E2}
h(x) = \max_{k \in \{1, \hdots, m\}} \|x - C_k\|^2 - r_k^2 \hspace{1cm} g_{\lambda}(x) = \lambda \cdot \|x - C_0\|^2
\end{align} for $ \lambda \in (0,1)$.  

\begin{rem}
Note that 
\begin{align}
\mathcal{Q} = \bigcap_{k = 1}^m \bar{\mathcal{B}}(C_k,r_k) = \{x | h(x) \leq 0\}
\end{align} which means that the intersection of balls is a sub-level set of the function $h$.
\end{rem}

\begin{rem}
Note that $h(x) - g_{\lambda}(x)$ is a convex function. Indeed
\begin{align}
&\|x - C_k\|^2 - r_k^2 - \lambda\cdot \|x - C_0\|^2 = \nonumber \\
& = \|x\|^2  -2\cdot x^T\cdot C_k + \|C_k^2\| - r_k^2 - \nonumber \\
& - \lambda \cdot (\|x\|^2-2\cdot x^T\cdot C_0 + \|C_0\|^2)  \nonumber \\
& = (1-\lambda)\cdot \|x\|^2 - 2\cdot x^T \cdot (C_k - \lambda \cdot C_0) + \|C_k\|^2 - \lambda\cdot \|C_0\|^2 - r_k^2 \nonumber \\
& = (1-\lambda)\cdot\left\| x - \frac{C_k - \lambda \cdot C_0}{1 - \lambda} \right\|^2 - \frac{\|C_k - \lambda\cdot C_0\|^2}{1 - \lambda} + \nonumber \\
& +  \|C_k\|^2 - \lambda\cdot \|C_0\|^2 - r_k^2 \nonumber \\
& =  (1-\lambda)\cdot\left\| x - \frac{C_k - \lambda \cdot C_0}{1 - \lambda} \right\|^2  - \frac{\lambda}{1 - \lambda} \cdot \|C_0 - C_k\|^2 - r_k^2
\end{align}
hence 
\begin{align}\label{E5f}
&h(x) - g_{\lambda}(x) = \nonumber \\
 &\max_{k \in \{1, \hdots, m\}}(1-\lambda)\cdot\left\| x - \frac{C_k - \lambda \cdot C_0}{1 - \lambda} \right\|^2  - \frac{\lambda}{1 - \lambda} \cdot \|C_0 - C_k\|^2 - r_k^2 
\end{align}
\end{rem}

Next, define a family of sets for $R \geq 0$
\begin{align}\label{E6f}
\mathcal{Q}_{R^2} = \{x | h(x) - g_{\lambda}(x) \leq -\lambda\cdot R^2\}
\end{align}. It is obvious that all the sets in the family are convex.

\begin{rem} Note that $\mathcal{Q}_{R^2}$ is actually an intersection of balls. Indeed $h(x) - g_{\lambda}(x) \leq -\lambda\cdot R^2$ is equivalent to 
\begin{align}
(1-\lambda)\cdot\left\| x - \frac{C_k - \lambda \cdot C_0}{1 - \lambda} \right\|^2  - \frac{\lambda}{1 - \lambda} \cdot \|C_0 - C_k\|^2 - r_k^2 \leq -\lambda\cdot R^2
\end{align} for all $k \in \{1, \hdots, m\}$. This is
\begin{align}\label{E9}
\left\| x - \frac{C_k - \lambda \cdot C_0}{1 - \lambda} \right\|^2 \leq \frac{1}{1-\lambda}\cdot \left( -\lambda\cdot R^2 + \frac{\lambda}{1 - \lambda} \cdot \|C_0 - C_k\|^2 + r_k^2 \right)
\end{align} for all $k \in \{1, \hdots, m\}$. 
\end{rem}

\begin{rem}\label{R3}
Note that $\mathcal{Q}$ is in $\mathcal{Q}_{0^2}$. Indeed, let $x_1$ belong to the boundary of $\mathcal{Q}$, then $h(x_1) = 0$, hence 
\begin{align}
h(x_1) - g_{\lambda}(x_1) = -g_{\lambda}(x_1) = -\lambda \cdot \|x_1 - C_0\|^2 \leq 0
\end{align} therefore $x_1 \in \mathcal{Q}_{0^2}$
\end{rem}

Having seen in Remark \ref{R3} what properties in relation to $\mathcal{Q}$ the set $\mathcal{Q}_{R^2}$ has for $R = 0$ (i.e $\mathcal{Q} \subseteq \mathcal{Q}_{0^2}$), it is natural to ask what happens if we increase $R$? Because $h,g_{\lambda}$ are bounded on bounded sets it is obvious that for large enough values of $R$ the set $\mathcal{Q}_{R^2}$ does not have elements in the fixed set $\mathcal{Q}$. Therefore the members of the family $\mathcal{Q}_{R^2}$ evolve as $R$ increases, from initially containing the set $\mathcal{Q}$ to not having elements in it. 
This is clarified in the following and this is the main idea of this section. 

In order to give the main result of this section, consider the convex optimzation problem:
\begin{align}\label{E8a}
\mathcal{H}^{\star} = \mathop{\text{argmin }}_{ h(x) \leq 1} h(x) - g_{\lambda}(x)
\end{align}
This is a convex optimization problem and can be solved in polynomial time.  
\begin{rem} \label{R4}
 The main observations are presented below:
\begin{enumerate}
\item Because $\mathcal{H}^{\star} \subseteq \{x | h(x) \leq 1\}$ follows that $\mathcal{H}^{\star}$ is bounded.
\item Since $\{x| h(x) \leq 1\}$ is bounded, exists $\infty > \bar{R} > 0$ such that $| h(x) - g_{\lambda}(x) | < \bar{R}^2$ for all $h(x) \leq 1$. Hence $\mathcal{Q}_{R^2} \cap \{x| h(x) \leq 1\} = \emptyset$ for all $R \geq \bar{R}$.  
\item Let $y \in \mathcal{H}^{\star}$. Then we denote
\begin{align}\label{E9a}
-\underline{R}^2 = h(y) - g_{\lambda}(y) \geq -\bar{R}^2 > -\infty
\end{align} It is easy to see that $\mathcal{H}^{\star} = \mathcal{Q}_{\underline{R}^2} \cap \{x | h(x) \leq 1\}$
\end{enumerate}
\end{rem}

The main theorem is presented in the following:
\begin{theorem} \label{T1}
With the notation from above the following alternatives are true:
\begin{enumerate}
\item If $\mathcal{H}^{\star} \subseteq \text{int} (\mathcal{Q}) $ then 
\begin{align}
R_0 = \max_{x \in \mathcal{Q}} \|x - C_0\| = \min \{R \geq 0| \mathcal{Q}_{R^2} \subseteq \mathcal{Q}\} 
\end{align}
\item If $\mathcal{H}^{\star} \subseteq \text{int} (\mathbb{R}^{n \times 1} \setminus \mathcal{Q}) $ then
\begin{align}
R_0 = \max_{x \in \mathcal{Q}} \|x - C_0\| = \max \{R \geq 0| \mathcal{Q}_{R^2} \cap \mathcal{Q} \neq \emptyset \} 
\end{align}
\item If $\mathcal{H}^{\star} \cap \partial\mathcal{Q} \neq \emptyset $ then $ \forall y \in \mathcal{H}^{\star}  \cap \partial \mathcal{Q}$ one has
\begin{align}
\underline{R}^2 \leq R_0^2 = \max_{x \in \mathcal{Q}} \|x - C_0\|^2 \leq \frac{1}{\lambda} \cdot \underline{R}^2 =  \frac{1}{\lambda} \cdot |h(y) - g_{\lambda}(y)| 
\end{align} 
\end{enumerate}
\end{theorem}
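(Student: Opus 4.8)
The plan is to recast the whole theorem in terms of the single function
\begin{align*}
\psi(x) \;=\; \|x - C_0\|^2 - \tfrac{1}{\lambda}\,h(x) \;=\; -\tfrac{1}{\lambda}\bigl(h(x) - g_\lambda(x)\bigr),
\end{align*}
which by the computation preceding (\ref{E9}) is \emph{strictly} concave (each branch of $h-g_\lambda$ is strictly convex since its quadratic coefficient $1-\lambda$ is positive, and a maximum of strictly convex functions is strictly convex), and which tends to $-\infty$ as $\|x\|\to\infty$ because its leading quadratic coefficient is $1-\tfrac1\lambda<0$. Hence $\psi$ attains a \emph{unique} maximum, say at $y^\star$, and likewise $\mathcal H^\star$ --- the minimizer of $h-g_\lambda$, equivalently the maximizer of $\psi$, over the compact set $\{x\mid h(x)\le 1\}$ --- is a single point, call it $y$. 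From (\ref{E6f}) one has $\mathcal Q_{R^2}=\{x\mid\psi(x)\ge R^2\}$, a family of closed convex sets decreasing in $R$, and $-\underline R^2=h(y)-g_\lambda(y)=-\lambda\,\psi(y)$. I will use two elementary observations repeatedly: (a) on $\mathcal Q$ one has $h\le 0$, so $\psi\ge\|\cdot-C_0\|^2$ there with equality on $\partial\mathcal Q$; in particular $\max_{\mathcal Q}\psi\ge R_0^2$, and if $x^\star$ maximizes $\|\cdot-C_0\|$ over $\mathcal Q$ then $x^\star\in\partial\mathcal Q$ whenever $R_0>0$ (otherwise move from $x^\star$ along $x^\star-C_0$), whence $\psi(x^\star)=R_0^2$; (b) if $h(y)<1$ then $y$ is interior to $\{h\le 1\}$, hence a local and therefore --- by concavity --- the global maximizer of $\psi$, i.e.\ $y=y^\star$, while in general $h(y^\star)>0$ as soon as $h(y)>0$.

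For alternative (1) we have $h(y)<0$, so $y^\star=y\in\operatorname{int}\mathcal Q$ and $\max\psi=\psi(y^\star)>R_0^2$ \emph{strictly}: it is $\ge R_0^2$ by (a), and equality would exhibit $x^\star$ as a second maximizer of the strictly concave $\psi$, forcing $x^\star=y^\star$, impossible since $h(x^\star)=0\ne h(y^\star)$. Next, for every $R>R_0$ I claim $\mathcal Q_{R^2}\subseteq\mathcal Q$: if some $x\in\mathcal Q_{R^2}$ had $h(x)>0$ then (necessarily $R^2\le\max\psi$, otherwise $\mathcal Q_{R^2}=\emptyset$) the segment $[x,y^\star]$ meets $\partial\mathcal Q$ at a point $z$ with $h(z)=0$, and concavity gives $\psi(z)\ge\min\{\psi(x),\psi(y^\star)\}\ge R^2>R_0^2$, contradicting $\psi(z)=\|z-C_0\|^2\le R_0^2$. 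Since $R_0^2<\max\psi$, the closed superlevel set $\mathcal Q_{R_0^2}$ is the closure of $\bigcup_{R>R_0}\mathcal Q_{R^2}$, hence $\mathcal Q_{R_0^2}\subseteq\overline{\mathcal Q}=\mathcal Q$ as well. Conversely, for $R<R_0$ the inclusion fails: $\psi(x^\star)=R_0^2>R^2$ forces $\psi>R^2$ on a neighbourhood of $x^\star$, and since $x^\star\in\partial\mathcal Q$ that neighbourhood meets $\mathbb R^{n}\setminus\mathcal Q$. Therefore $\{R\ge0\mid\mathcal Q_{R^2}\subseteq\mathcal Q\}=[R_0,\infty)$, whose minimum is $R_0$.

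For alternative (2) we have $h(y)>0$, hence $h(y^\star)>0$, so the global maximizer of $\psi$ lies outside $\mathcal Q$. It suffices to show $\max_{\mathcal Q}\psi=R_0^2$, since then $\mathcal Q_{R^2}\cap\mathcal Q=\{x\in\mathcal Q\mid\psi(x)\ge R^2\}\ne\emptyset$ exactly for $R\le R_0$. The bound ``$\ge$'' is (a); for ``$\le$'', a maximizer of $\psi$ over $\mathcal Q$ cannot lie in $\operatorname{int}\mathcal Q$ --- it would then be a local, hence global, maximizer of $\psi$, i.e.\ equal to $y^\star\notin\mathcal Q$, absurd --- so it lies on $\partial\mathcal Q$, where $\psi=\|\cdot-C_0\|^2\le R_0^2$. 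For alternative (3) take $y\in\mathcal H^\star\cap\partial\mathcal Q$, so $h(y)=0$ and hence $y=y^\star$; then $\psi(y)=\|y-C_0\|^2\le R_0^2$ (as $y\in\mathcal Q$) and $\psi(y)=\max\psi\ge\max_{\mathcal Q}\psi\ge R_0^2$, so $\psi(y)=R_0^2$. Consequently $-\underline R^2=h(y)-g_\lambda(y)=-\lambda\,\|y-C_0\|^2=-\lambda R_0^2$, i.e.\ $\underline R^2=\lambda R_0^2$, which immediately yields $\underline R^2\le R_0^2=\tfrac1\lambda\underline R^2=\tfrac1\lambda|h(y)-g_\lambda(y)|$, and in particular the asserted sandwich.

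I expect the main obstacle to be alternative (1): pushing the inclusion $\mathcal Q_{R^2}\subseteq\mathcal Q$ all the way down to the endpoint $R=R_0$ (not merely for $R>R_0$) forces the closure argument together with the strict inequality $\max\psi>R_0^2$, and one must also dispose of the degenerate possibilities ($R_0=0$, $\mathcal Q$ lower-dimensional, $\mathcal Q_{R^2}$ empty for large $R$, and $\operatorname{int}\mathcal Q=\emptyset$ making alternative (1) vacuous). Every remaining step is a routine consequence of the concavity of $\psi$ and of the fact that $\psi$ coincides with $\|\cdot-C_0\|^2$ on $\partial\mathcal Q$ while dominating it on $\mathcal Q$.
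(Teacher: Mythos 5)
Your proof is correct, and it takes a genuinely different route from the paper's. Both arguments ultimately rest on the same identity --- on $\partial\mathcal{Q}$ one has $h - g_{\lambda} = -\lambda\|\cdot - C_0\|^2$, i.e.\ your $\psi$ equals $\|\cdot - C_0\|^2$ there and dominates it on $\mathcal{Q}$ --- but the machinery differs. The paper works directly with the extremal radius $\tilde{R}$ (the min over $\{R \mid \mathcal{Q}_{R^2}\subseteq\mathcal{Q}\}$, resp.\ the max over $\{R \mid \mathcal{Q}_{R^2}\cap\mathcal{Q}\neq\emptyset\}$) and derives two contradictions by placing the farthest boundary point $z$ inside $\mathcal{Q}_{R_0^2}\cap\partial\mathcal{Q}$; in doing so it asserts without proof that at the extremal radius $\partial\mathcal{Q}\cap\mathcal{Q}_{\tilde{R}^2}\neq\emptyset$ (``it can be argued''). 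You instead exploit the strict concavity and coercivity of $\psi$, the uniqueness of its unconstrained and constrained maximizers, an intermediate-value argument along segments to $y^{\star}$, and a closure argument to push the inclusion $\mathcal{Q}_{R^2}\subseteq\mathcal{Q}$ down to the endpoint $R=R_0$; this buys a tighter treatment of exactly the step the paper leaves implicit. Your case (3) is in fact stronger than the statement: you show $\psi(y)=R_0^2$, hence $R_0^2 = \frac{1}{\lambda}\underline{R}^2$ with equality and $y$ itself is a farthest point, which implies the paper's sandwich. The degeneracies you flag but do not settle are harmless: under alternative (1) the hypothesis forces $\operatorname{int}\mathcal{Q}\neq\emptyset$, hence $R_0>0$ and the farthest point lies on $\partial\mathcal{Q}$ (and $y\in\operatorname{int}\mathcal{Q}$ gives $h(y)<0$ because the strictly convex $h$ cannot be locally constant), while in alternatives (2) and (3) you never use $x^{\star}\in\partial\mathcal{Q}$; empty $\mathcal{Q}_{R^2}$ for large $R$ is also innocuous. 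So only cosmetic patching is needed, not a new idea.
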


\begin{proof}
First recall that $\mathcal{Q} \subseteq \mathcal{Q}_{0^2}$ from Remark \ref{R3} and let $\underline{R}$ be defined by (\ref{E9a}). 
\begin{enumerate}
\item Proof for the case $\mathcal{H}^{\star} \subseteq \text{int} (\mathcal{Q}) $ : 
 From Remark \ref{R4} statement 3, follows that $\mathcal{Q}_{\underline{R}^2} \subseteq \mathcal{Q} \subseteq \mathcal{Q}_{0^2}$ hence the set $\{R >0 | \mathcal{Q}_{R^2} \subseteq \mathcal{Q}\} \neq \emptyset$. Furthermore, because $\mathcal{Q} \subseteq \mathcal{Q}_{0^2}$ follows that the minimum of this set is strictly positive. Let 
\begin{align}
\tilde{R} = \min \{R > 0| \mathcal{Q}_{R^2} \subseteq \mathcal{Q}\} 
\end{align} and it is proven that: 
\begin{enumerate}
\item $R_0 \leq \tilde{R}$ First, note from the definition of $\tilde{R}$ that $\mathcal{Q}_{R^2} \cap \partial \mathcal{Q} = \emptyset $ for all $R> \tilde{R}$ (since $\partial \mathcal{Q}_{R^2} \cap \partial \mathcal{Q}_{T^2} = \emptyset$ for all $R \neq T$ and $\mathcal{Q}_{R^2} \subseteq \mathcal{Q}_{T^2} $ for all $R \geq T$ ). 

 Next, let $ z \in \partial \mathcal{Q}$ with $\|z - C_0\| = R_0$. It follows
\begin{align}
h(z) - g_{\lambda}(z) = 0 - \lambda \cdot R_0^2  \hspace{0.2cm} \Rightarrow \hspace{0.2cm} z \in \mathcal{Q}_{R_0^2}
\end{align} hence $z \in \mathcal{Q}_{R_0^2} \cap \partial \mathcal{Q}$. Assuming  that $R_0 > \tilde{R}$, a contradiction is obtained with $\mathcal{Q}_{R_0^2} \cap \partial \mathcal{Q} = \emptyset$. 

\item $R_0 \geq \tilde{R}$ Indeed, assume that $R_0 < \tilde{R}$ and let $z \in \partial \mathcal{Q} \cap \mathcal{Q}_{\tilde{R}^2}$ then 
\begin{align}
h(z) - g_{\lambda}(z) \leq - \lambda\cdot \tilde{R}^2 < -\lambda\cdot R_0^2 \hspace{0.2cm} \Rightarrow \hspace{0.2cm} \|z - C_0\|^2 > R_0^2
\end{align} which is a contradiction with the definition of $R_0$
\end{enumerate}
\item Proof for the case $\mathcal{H}^{\star} \subseteq \text{int} (\mathbb{R}^{n \times 1} \setminus \mathcal{Q}) $ : Recall from Remark \ref{R4} statement 3 that 
\begin{align}
\mathcal{H}^{\star} = \mathcal{Q}_{\underline{R}^2} \cap \{x | h(x) \leq 1\}
\end{align} therefore in this case  one has $\mathcal{Q}_{\underline{R}^2} \cap \mathcal{Q} = \emptyset$. Since $\mathcal{Q} \subseteq \mathcal{Q}_{0^2}$ follows that the set $\{R > 0 | \mathcal{Q} \cap \mathcal{Q}_{R^2} \neq \emptyset \}$ is not empty and is bounded. Let 
\begin{align}
\tilde{R} = \max \{R > 0 | \mathcal{Q} \cap \mathcal{Q}_{R^2} \neq \emptyset \}
\end{align} and one proves that
\begin{enumerate}
\item $R_0 \leq \tilde{R}$ First, note from the definition of $\tilde{R}$ that $\mathcal{Q}_{R^2} \cap \partial \mathcal{Q} = \emptyset $ for all $R> \tilde{R}$ (it is the same statement with the previous case in the theorem, but here is a different reasons for its validity. In this case if $\exists R > \tilde{R}$ with $\partial \mathcal{Q} \cap \mathcal{Q}_{R^2} \neq \emptyset$ then $\mathcal{Q} \cap \mathcal{Q}_{R^2} \neq \emptyset$ and this contradicts the definition of $\tilde{R}$). 

 Next, let $ z \in \partial \mathcal{Q}$ with $\|z - C_0\| = R_0$. It follows
\begin{align}
h(z) - g_{\lambda}(z) = 0 - \lambda\cdot R_0^2  \hspace{0.2cm} \Rightarrow \hspace{0.2cm} z \in \mathcal{Q}_{R_0^2}
\end{align} hence $z \in \mathcal{Q}_{R_0^2} \cap \partial \mathcal{Q}$. Assuming  that $R_0 > \tilde{R}$, one obtains a contradiction with $\mathcal{Q}_{R_0^2} \cap \partial \mathcal{Q} = \emptyset$.

\item $R_0 \geq \tilde{R}$ It is know that $\mathcal{Q} \cap \mathcal{Q}_{\tilde{R}^2} \neq \emptyset$ and $\mathcal{Q} = \text{int}(\mathcal{Q}) \cup \partial \mathcal{Q}$ and since it can be argued that $\text{int}(\mathcal{Q}) \cap \mathcal{Q}_{\tilde{R}^2} = \emptyset$ follows that $\partial \mathcal{Q} \cap \mathcal{Q}_{\tilde{R}^2} \neq \emptyset$. 

Assume that $R_0 < \tilde{R}$ and let $z \in \partial \mathcal{Q} \cap \mathcal{Q}_{\tilde{R}^2}$  then 
\begin{align}
h(z) - g_{\lambda}(z) \leq -\lambda\cdot \tilde{R}^2 < -\lambda\cdot R_0^2 \hspace{0.2cm} \Rightarrow \hspace{0.2cm} \|z - C_0\|^2 > R_0^2
\end{align} which is a contradiction with the definition of $R_0$
\end{enumerate}

\item Proof for the case $\mathcal{H}^{\star} \cap \partial\mathcal{Q} \neq \emptyset $ : 
 One proves that 
\begin{enumerate}
\item $R_0 \leq \frac{1}{\sqrt{\lambda}}\cdot \underline{R}$ Indeed, let $z \in \partial \mathcal{Q} \subseteq \{x | h(x) \leq 1\}$ with $\| z - C_0\| = R_0$ and assume that $R_0 > \frac{1}{\sqrt{\lambda}}\cdot \underline{R}$. Then 
\begin{align}
h(z) - g_{\lambda}(z) = 0 - \lambda\cdot R_0^2 < - \underline{R}^2 
\end{align} contradicting the definition of $\underline{R}$ from Remark \ref{R4} st. 3 as the minimum value of $h(x) - g_{\lambda}(x)$ over the set $\{x| h(x) \leq 1\}$. 

\item $R_0 \geq\underline{R}$ From Remark \ref{R4} statement 3 follows that 
\begin{align}
\mathcal{H}^{\star} = \mathcal{Q}_{\underline{R}^2} \cap \{x | h(x) \leq 1\} \hspace{0.2cm} \Rightarrow \hspace{0.2cm} \mathcal{Q}_{\underline{R}^2} \cap \{x | h(x) \leq 1\} \cap \partial \mathcal{Q} \neq \emptyset
\end{align} Assume that $R_0 < \underline{R}$ and let $z \in \partial \mathcal{Q} \cap \mathcal{Q}_{\underline{R}^2}$. It follows
\begin{align}
h(z) - g_{\lambda}(z) = 0 - \lambda\cdot  \|z - C_0\|^2 \leq - \lambda\cdot \underline{R}^2 \hspace{0.2cm} \Rightarrow \hspace{0.2cm} \|z - C_0\| \geq \underline{R} > R_0
\end{align} which is a contradiction with the definition of $R_0$
\end{enumerate}
\end{enumerate}
\end{proof}

Similarly to \cite{funcos1} Corollary 1, the following are true for this case as well: 

\begin{corollary}\label{C2} With the notations from above, if \\ $C_0 \not\in \text{conv}\{C_1, \hdots, C_m\}$ then $\mathcal{H}^{\star} \subseteq \text{int}(\mathbb{R}^{n \times 1} \setminus \mathcal{Q})$  
\end{corollary}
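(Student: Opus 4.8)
The plan is to argue by contradiction from the first--order optimality conditions of the convex program (\ref{E8a}). Since $\mathcal{Q}$ is closed one has $\text{int}(\mathbb{R}^{n\times 1}\setminus\mathcal{Q})=\mathbb{R}^{n\times 1}\setminus\mathcal{Q}=\{x\mid h(x)>0\}$, so it suffices to prove that $h(y)>0$ for every $y\in\mathcal{H}^{\star}$. Suppose instead that some $y\in\mathcal{H}^{\star}$ satisfies $h(y)\le 0<1$. Then $y$ is an interior point of the feasible set $\{x\mid h(x)\le 1\}$ of (\ref{E8a}), hence an unconstrained minimizer of the convex function $\phi:=h-g_{\lambda}$, so $0\in\partial\phi(y)$. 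Writing $f_k(x)=\|x-C_k\|^2-r_k^2$ and letting $A(y)=\{k\mid f_k(y)=h(y)\}$ be the active index set, we have $\partial h(y)=\text{conv}\{\,2(y-C_k)\mid k\in A(y)\,\}$ and $\nabla g_{\lambda}(y)=2\lambda(y-C_0)$, so there exist weights $\mu_k\ge 0$, $k\in A(y)$, with $\sum_{k\in A(y)}\mu_k=1$ and
\[
\sum_{k\in A(y)}\mu_k\,(y-C_k)=\lambda\,(y-C_0).
\]

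Next I would feed in the hypothesis $C_0\notin\text{conv}\{C_1,\dots,C_m\}$ through a separating hyperplane: choose a unit vector $u$ and $\delta>0$ with $u^{T}(C_0-C_k)\ge\delta$ for all $k$. Pairing the displayed identity with $u$ and writing $y-C_k=(y-C_0)+(C_0-C_k)$ gives $\lambda\,u^{T}(y-C_0)\ge u^{T}(y-C_0)+\delta$, that is
\[
u^{T}(y-C_0)\le-\frac{\delta}{1-\lambda}<0 .
\]
It is instructive to note that when $\lambda=1$ the situation collapses: $2(y-C_k)-2\lambda(y-C_0)=2(C_0-C_k)$ no longer depends on $y$, so $0\in\partial\phi(y)$ already forces $0\in\text{conv}\{\,C_0-C_k\mid k\in A(y)\,\}$, i.e.\ $C_0\in\text{conv}\{C_1,\dots,C_m\}$, contradicting the hypothesis outright; this is exactly the mechanism of \cite{funcos1} Corollary~1.

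For $\lambda\in(0,1)$ that cancellation is unavailable, and the inequality $u^{T}(y-C_0)\le-\delta/(1-\lambda)$ alone is consistent --- it merely places $y$ strictly on the far side of a separating hyperplane from $C_0$. The contradiction therefore has to be closed using the still--unused assumption $h(y)\le 0$, i.e.\ $y\in\mathcal{Q}=\bigcap_k\bar{\mathcal{B}}(C_k,r_k)$: one would combine the location bound just obtained for $y$ with the ball inclusions $\|y-C_k\|\le r_k$ (and, where convenient, with the identification $\mathcal{H}^{\star}=\mathcal{Q}_{\underline{R}^{2}}\cap\{x\mid h(x)\le 1\}$ from Remark~\ref{R4} and the estimates of Theorem~\ref{T1}) to violate either $y\in\mathcal{Q}$ or the defining minimality of $\underline{R}$. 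Reconciling the stationarity identity with the constraint $y\in\mathcal{Q}$ in the presence of the extra term $g_{\lambda}$ is the step I expect to be the main obstacle, precisely because it is where the argument must leave the clean $\lambda=1$ reasoning of \cite{funcos1} behind.
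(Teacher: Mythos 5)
Your reduction and first-order analysis are correct as far as they go: since $\mathcal{Q}$ is closed, $\text{int}(\mathbb{R}^{n\times 1}\setminus\mathcal{Q})=\{x\mid h(x)>0\}$; a hypothetical $y\in\mathcal{H}^{\star}\cap\mathcal{Q}$ is interior to $\{h\le 1\}$, so $0\in\partial(h-g_{\lambda})(y)$; the max-rule gives the stationarity identity $\sum_{k}\mu_k(y-C_k)=\lambda(y-C_0)$; and pairing with the separating direction yields $u^{T}(y-C_0)\le-\delta/(1-\lambda)$. But the proof is not completed: no contradiction is ever reached, and you say so explicitly, so as a proof of the corollary this is a genuine gap. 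For comparison, the paper argues in the primal: it assumes $y^{\star}\in\mathcal{H}^{\star}\cap\mathcal{Q}$, perturbs along the separating normal $v=A/\|A\|$, and claims that strict descent of $h-g_{\lambda}$ for small $\alpha>0$ reduces to $(1-\lambda)\alpha+2v^{T}(C_0-C_k)<0$, which is then immediate from $v^{T}(C_0-C_k)<0$. Your subgradient computation is exactly the dual form of that perturbation argument, so had a cancellation been available you would have recovered the paper's proof.

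The step you singled out as the obstacle is, however, precisely where the paper's chain of equivalences slips. Expanding correctly, descent along $v$ at an active index $k$ requires
\[(1-\lambda)\,\alpha+2(1-\lambda)\,v^{T}(y^{\star}-C_0)+2\,v^{T}(C_0-C_k)<0,\]
whereas the paper silently replaces $2\lambda\alpha\, v^{T}(y^{\star}-C_0)$ by $2\alpha\, v^{T}(y^{\star}-C_0)$ between its second and third displayed lines, which is exactly what makes the uncontrolled middle term disappear. That term is genuinely uncontrolled, and your stationarity identity shows the obstruction is real: take $n=1$, $m=1$, $C_1=0$, $r_1=1$, $C_0=2$, $\lambda=1/10$; then $h-g_{\lambda}$ is minimized over $\{h\le 1\}=[-\sqrt{2},\sqrt{2}]$ at $y=-2\lambda/(1-\lambda)=-2/9\in\text{int}(\mathcal{Q})$, although $C_0\notin\text{conv}\{C_1\}$. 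So the corollary as stated fails for small $\lambda$, and the missing step in your argument cannot be filled in full generality: it needs an additional hypothesis, e.g. $\lambda$ close enough to $1$ relative to the geometry (the case $\lambda=1$ of \cite{funcos1} being the clean limit you identified), or a quantitative condition guaranteeing $2(1-\lambda)v^{T}(y-C_0)+2v^{T}(C_0-C_k)<0$ at the putative minimizer. Flagging this honestly was the right call; the fix lies in amending the statement, not in a cleverer closing of your contradiction.
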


\begin{proof}
Begin by stating that since $C_0 \not\in \text{conv}\{C_1, \hdots, C_m\}$ then exists a hyperplane $\{x | A^T \cdot x + b = 0\}$ with $A \in \mathbb{R}^{n \times 1}$ and $b \in \mathbb{R}$ such that 
\begin{align}\label{E12d}
A^T \cdot C_0 + b< 0 \hspace{0.5cm} A^T \cdot C_k + b > 0 \hspace{0.5cm} \forall k \in \{1, \hdots, m\}
\end{align}.  

 One now shows that $\mathcal{Q} \cap \mathcal{H}^{\star} = \emptyset$. Indeed, assuming otherwise, let $y^{\star} \in \mathcal{H}^{\star} \cap \mathcal{Q}$ and let 
$v = \frac{A}{\|A\|}$.
We show that $\exists \alpha > 0$ such that $y = y^{\star} + \alpha \cdot v \in \{x | h(x) \leq 1\}$ and $h(y) - g_{\lambda}(y) < h(y^{\star}) - g_{\lambda}(y^{\star})$ contradicting the fact that $y^{\star}$ is a minimum (of $h - g_{\lambda}$). 

First, since $y^{\star} \in \mathcal{Q} = \{x | h(x) \leq 0\}$ acknowledge the existence of $\alpha_1 > 0$ such that $y^{\star} + \alpha \cdot v \in \{x | h(x) \leq 1\}$ for all $0 < \alpha \leq \alpha_1$ 

Next, assume w.l.o.g that $h(y^{\star}) = \|y^{\star} - C_k\|^2 - r_k^2$ for all $k \in \{1, \hdots, p \leq m\}$.  Then exists $\alpha_2 > 0$ such that $h(y^{\star} + \alpha \cdot v) = \| y^{\star} + \alpha \cdot v - C_k\|^2 - r_k^2$ for some $k \in \{1, \hdots, p \leq m\}$ for all $0 < \alpha \leq \alpha_2$.

Let $\alpha_0 = \min \{\alpha_1, \alpha_2\}$ and $0 < \alpha \leq \alpha_0$ then

\begin{align}
h(y^{\star} + \alpha\cdot v) - g_{\lambda}(y^{\star} + \alpha \cdot v) &< h(y^{\star}) - g_{\lambda}(y^{\star}) 
\end{align}

 Indeed, since as stated above $\exists k \in \{1, \hdots, m\}$ such that $h(y^{\star}) = \|y^{\star} - C_k\|^2 - r_k^2$ and $h(y^{\star} + \alpha \cdot v) = \|y^{\star} + \alpha \cdot v - C_k\|^2 - r_k^2$ follows that 

\begin{align}
&\| y^{\star} + \alpha \cdot v - C_k\|^2 - r_k^2 - \lambda\cdot \|y^{\star} + \alpha \cdot v - C_0\|^2 <  \nonumber \\
&<\| y^{\star} - C_k\|^2 - r_k^2 - \lambda\cdot \|y^{\star} - C_0\|^2
\end{align} is equivalent to 
\begin{align} 
&\| y^{\star} + \alpha \cdot v - C_k\|^2 - \|y^{\star} - C_k\|^2 < \nonumber \\
&< \lambda\cdot \left( \|y^{\star} + \alpha \cdot v - C_0\|^2 - \|y^{\star} - C_0\|^2\right)  
\end{align} which is 
\begin{align}
&\alpha \cdot v^T \cdot \left( 2 (y^{\star} - C_k) + \alpha \cdot v\right) < \lambda\cdot \alpha \cdot v^T \cdot \left( 2 (y^{\star} - C_0) + \alpha \cdot v\right) \iff \nonumber \\
&2 \alpha \cdot v^T \cdot (y^{\star} - C_k) < 2 \alpha \cdot v^T \cdot (y^{\star} - C_0) + (\lambda - 1)\cdot \alpha^2\cdot \|v\|^2  \iff \nonumber \\
& (1- \lambda)\cdot \|\alpha\cdot v \|^2 + 2 \alpha \cdot v^T \cdot (C_0 - C_k) < 0 \iff \nonumber \\
&(1- \lambda)\cdot \alpha + 2\cdot v^T\cdot (C_0 - C_k) < 0
\end{align} Since $ v^T\cdot (C_0 - C_k) < 0$ follows that $\exists \alpha >0$ such that  the above are met.

\end{proof}

In \cite{funcos1} the authors provide a polynomial algorithm for finding the maximizer (proved to be unique) for the case in which $C_0 \not\in \text{conv}\{C_1, \hdots, C_m\}$ 

 These cases are therefore no longer of interest for us. This paper shall focus on the cases in which $C_0 \in \text{conv}\{C_1, \hdots, C_m\}$.  
In this situation therefore one has one of the two:
\begin{enumerate}
\item $\mathcal{H}^{\star} \subseteq \text{int}(Q)$
\item $\mathcal{H}^{\star} \cap \partial \mathcal{Q} \neq \emptyset$
\end{enumerate}

In order to distinguish between the two, one can simply compute $\mathcal{H}^{\star}$. Note that because $h - g_{\lambda}$ is a piecewise quadratic function, hence strictly convex and therefore its minimum is unique. 

Let $\{ y^{\star} \} = \mathcal{H}^{\star}$  and assume that $C_0 \in \text{conv}\{C_1, \hdots, C_m\}$. If $y^{\star} \in \partial \mathcal{Q}$ then we stop and according to Theorem \ref{T1}, return the result:
\begin{align}
R_0 = \max_{x \in \mathcal{Q}}\| x -C_0\| \in \left[\|y^{\star} - C_0\|, \frac{1}{\sqrt{\lambda}} \cdot \|y^{\star} - C_0\| \right]
\end{align}, otherwise, continue with the next subsection.

\subsection{Maximizing Distances Over Intersection of Balls if $\mathcal{H}^{\star} \subseteq \text{int}(\mathcal{Q})$}
 In the following we assume that $C_0 \in \text{int}(\text{conv}\{C_1, \hdots, C_m\})$ and apply Theorem \ref{T1} to solve $\max_{x \in \mathcal{Q}} \|x - C_0\|$. For this subsection, see Figure \ref{fig1} and \ref{fig2}. Let 
\begin{align}\label{E32a}
R_0= \max_{x \in \mathcal{Q}} \|x - C_0\|
\end{align} We assume that an interval for $R_0$ is known apriori, that is one knows $\underline{R}_0, \overline{R}_0 \in \mathbb{R}_+$ such that $\underline{R}_0 \leq R_0 \leq \overline{R}_0$. 

Theorem \ref{T1} assures the existence of the set $\mathcal{Q}_{R_0^2}$ which is an intersection of balls with the following properties:
\begin{enumerate}
\item $\mathcal{Q}_{R_0^2} \subseteq \mathcal{Q}$
\item the vertices of $\mathcal{Q}_{R_0^2}$ on the boundary of $\mathcal{Q}$ are the farthest points in $\mathcal{Q}$ to $C_0$.
\end{enumerate}

This is sufficient to assert that 
\begin{align}
R_0 = \max_{x \in \mathcal{Q}_{R_0^2}} \|x - C_0\|
\end{align} From (\ref{E9}), the balls forming $\mathcal{Q}_{R_0^2}$ are:

\begin{align}
\left\| x - \frac{C_k - \lambda \cdot C_0}{1 - \lambda} \right\|^2 \leq \frac{1}{1-\lambda}\cdot \left( -\lambda\cdot R_0^2 + \frac{\lambda}{1 - \lambda} \cdot \|C_0 - C_k\|^2 + r_k^2 \right)
\end{align} We note:
\begin{enumerate}
\item The centers of the new intersection of balls are:
\begin{align}\label{E35}
C_{k,1} :=  \frac{C_k - \lambda \cdot C_0}{1 - \lambda}
\end{align}
\item The radii of the new intersection of balls are given by:
\begin{align}\label{E36}
r_{k,1}^2 := \frac{1}{1-\lambda}\cdot \left( -\lambda\cdot R_0^2 + \frac{\lambda}{1 - \lambda} \cdot \|C_0 - C_k\|^2 + r_k^2 \right)
\end{align}
\item The centers of the balls in the new intersection of balls, i.e. $\mathcal{Q}_{R_0^2}$, do not depend on $R_0$.
\end{enumerate}

Let's denote $\mathcal{Q}_{R_0^2}^1 := \mathcal{Q}_{R_0^2}$ and hence $R_0 = \max_{x \in \mathcal{Q}_{R_0^2}^1} \|x - C_0\|$. The superscript will count the generations of balls centers, as seen below.

Proceed as follows:
\begin{enumerate}
\item If $C_0 \not\in \text{conv}\{C_{1,1}, \hdots, C_{m,1}\}$ then for any $R>0$ one can solve $\max_{\mathcal{Q}_{R^2}^1} \|x - C_0\|$ with the the polynomial algorithm given in \cite{funcos1}, to find $R_0 = \max_{\mathcal{Q}_{R_0^2}^1} \|x - C_0\|$. In this case, the algorithm stops and return $R_0$ as the solution to the optimization problem (\ref{E32a}). Note that $R_0$ is a fixed point of a uni-variate function $f(R) = \max_{\mathcal{Q}_{R^2}^1} \|x - C_0\|$. 

\item If $C_0 \in \text{conv}\{C_{1,1}, \hdots, C_{m,1}\}$ compute $\{y_1^{\star}\} = \mathcal{H}^{\star}_1$ associated with the new intersection of balls, $\mathcal{Q}_{R_0^2}^1$. We distinguish two cases
\begin{enumerate}
\item $y_1^{\star} \in \partial\mathcal{Q}_{R_0^2}^1 $ :  Since $ \partial \mathcal{Q}_{R_0^2}^1$ has zero measure in $\mathbb{R}^n$ and because we do not know how to assert it (because $R_0$ is not known) we shall ignore this case for the time being. 

\item $y_1^{\star} \in \text{int}(\mathcal{Q}_{R_0^2}^1) $ : We always consider this case. From Theorem \ref{T1} first statement, one concludes that exists $\mathcal{Q}_{R_0^2}^2$ such that $R_0 = \max_{x \in \mathcal{Q}_{R_0^2}^2} \|x - C_0\|$. From (\ref{E9}), the balls forming $\mathcal{Q}_{R_0^2}^2$ are:

\begin{align}
\left\| x - \frac{C_{k,1} - \lambda \cdot C_0}{1 - \lambda} \right\|^2 \leq \frac{1}{1-\lambda}\cdot \left( -\lambda\cdot R_0^2 + \frac{\lambda}{1 - \lambda} \cdot \|C_0 - C_{k,1}\|^2 + r_{k,1}^2 \right)
\end{align} 
The centers of the new intersection of balls are:
\begin{align}
C_{k,2} :=  \frac{C_{k,1} - \lambda \cdot C_0}{1 - \lambda}
\end{align} Note that these do not depend on $R_0$. 

The radii of the new intersection of balls are given by:
\begin{align}
r_{k,2}^2 := \frac{1}{1-\lambda}\cdot \left( -\lambda\cdot R_0^2 + \frac{\lambda}{1 - \lambda} \cdot \|C_0 - C_{k,1}\|^2 + r_{k,1}^2 \right)
\end{align}

\end{enumerate}
\end{enumerate} 

As such, we define the algorithm:

\begin{algorithm} 
\caption{Procedure A}\label{procA}
 This procedure given an intersection of balls $\mathcal{Q} = \bigcap_{k=1}^m \bar{\mathcal{B}}(C_k,r_k)$ computes a sequence of centers of balls. 
\begin{algorithmic}[1]
\Require $C_0$ , $C_{1}, \hdots, C_{m}$ and $r_1, \hdots, r_m$
 
\State $C_{k,0} \gets C_k$
\State $i \gets 0$
\While{$C_0 \in \text{conv}\{C_{1,i}, \hdots, C_{m,i}$ \}}
\State $C_{k,i+1} :=  \frac{C_{k,i} - \lambda \cdot C_0}{1 - \lambda}$
\State $i \gets i+1$
\EndWhile

\end{algorithmic}

\end{algorithm}

 By repeating the process, we are therefore able to assert the existence of a sequence of intersection of balls $\mathcal{Q}_{R_0^2}^i$ with the following properties:
\begin{align}\label{E40}
\mathcal{Q}_{R_0^2}^{i+1} &\subseteq \mathcal{Q}_{R^2_0}^i \nonumber \\
\max_{x \in \mathcal{Q}_{R_0^2}^i} \|x - C_0\| = &\hdots = \max_{x \in \mathcal{Q}_{R_0^2}^1} \|x - C_0\| = R_0
\end{align}

As explained above, the presented process either stops, returning $R_0$ (if $C_0$ remains outside the convex hull of the centers of some generated intersection of balls $\mathcal{Q}_{R_0^2}^i$) or continues by returning the centers of a new intersection of balls ( if $C_0$ remains in the convex hull of the centers of the generated intersection of balls $\mathcal{Q}_{R_0^2}^i$ ). 

Note that Procedure A from Algorithm \ref{procA}, in case $C_0$ is in the convex hull of the centers of the intersection of balls, is able to generate the centers of the next intersection of balls even though the balls radii are not known, because as noted, these do not depend on $R_0$. 

 One naturally asks: does this process stop? That is, does at any iteration $i \geq 1$, the point $C_0$ remain outside the convex hull of the balls centers of the intersection of balls $\mathcal{Q}_{R_0^2}^i$ ? The answer is NO in general, as can be verified with some immediate examples. For this see Figure \ref{fig3} and \ref{fig4}.

Our approach to this situation, is the main result of this paper: we ask, for a given intersection of ball $\mathcal{Q}$ with balls centers in $C_k$ and of radius $r_k$ with the given point $C_0$: \textit{ is this intersection of ball generated from an intersection of balls which had $C_0$ outside the convex hull of the centers of the balls forming it}? That is, instead of going outwards, with forming new intersection of balls on top of what is given, and as such leaving $C_0$ deeper and deeper in the convex hull of the newly generated balls centers, can we go inwards? We will call that intersection of balls a "seed" (out of which the given intersection of balls grew, through the explained process in Procedure A Algorithm \ref{procA}). 

\subsection{Analysis of the reverse sequence}

From \ref{E35} it is readily obvious that denoting $C_{k,-1}$ the centers of the previous generation intersection of balls, one gets:
\begin{align}
C_k = \frac{ C_{k,-1} - \lambda \cdot C_0}{1 - \lambda} \hspace{0.1cm}\Rightarrow \hspace{0.1cm} C_{k,-1} = (1-\lambda)\cdot C_k  + \lambda \cdot C_0
\end{align} repeating the process one gets:

\begin{align}\label{E42}
C_{k,-i} = \frac{ C_{k,-(i+1)} - \lambda \cdot C_0}{1 - \lambda} \hspace{0.1cm}\Rightarrow \hspace{0.1cm} C_{k,-(i+1)} = (1-\lambda)\cdot C_{k,-i}  + \lambda \cdot C_0
\end{align} then

\begin{align}\label{E46c}
C_{k,-i} &= (1-\lambda)^i \cdot C_{k,0} + \lambda \cdot C_0 \cdot \sum_{p=0}^{i-1} (1-\lambda)^p \nonumber \\
& = (1-\lambda)^i \cdot C_{k,0} + \left( 1 - (1 - \lambda)^i \right)\cdot C_0
\end{align} hence $\|C_0- C_{k,-i-1}\| = (1-\lambda)\cdot \|C_0 - C_{k,-i}\| = \hdots = (1-\lambda)^i \cdot \|C_0 - C_{k,-1}\|$ therefore $C_{k,-i} \to C_0$.
%
%


%

Since the existence of the centers of the balls has been positively established, one focuses on the existence of radii, $r_{k,-i}$ of the balls centered in $C_{k,-i}$ such that after the application of the above presented process the given radii, $r_k$ are obtained.  Starting with the radius $r_{k,-i}$, from (\ref{E36}) one gets

\begin{align}\label{E43}
&r_{k,-i+1}^2 := \frac{1}{1-\lambda}\cdot \left( -\lambda\cdot R_0^2 + \frac{\lambda}{1 - \lambda} \cdot \|C_0 - C_{k,-i}\|^2 + r_{k, -i}^2 \right) \nonumber \\
& r_{k,-i+2}^2 := \frac{1}{1-\lambda}\cdot \left( -\lambda\cdot R_0^2 + \frac{\lambda}{1 - \lambda} \cdot \|C_0 - C_{k,-i+1}\|^2 + r_{k, -i+1}^2 \right) \nonumber \\ 
& \vdots \nonumber \\
& r_{k}^2 := \frac{1}{1-\lambda}\cdot \left( -\lambda\cdot R_0^2 + \frac{\lambda}{1 - \lambda} \cdot \|C_0 - C_{k,-1}\|^2 + r_{k, -1}^2 \right) \nonumber \\
\end{align}

Using (\ref{E43}), because $C_{k,-j}$ are known for $j \in \{1, \hdots, i\}$ it is possible to start from the given $r_k$ and iteratively compute back $r_{k,-1}, r_{k,-2}, \hdots, r_{k,-i}$ as a function of $R_0 = \max_{x \in \mathcal{Q}} \|x - C_0\|$. Then similarily to (\ref{E40}) one has

\begin{align}
&\mathcal{Q} \subseteq \mathcal{Q}_{R_0^2}^{-1}  \subseteq \hdots \subseteq \mathcal{Q}_{R_0^2}^{-i+1} \subseteq \mathcal{Q}_{R^2_0}^{-i} \nonumber \\
&R_0 = \max_{x \in \mathcal{Q}_{R_0^2}^{-1}} \|x - C_0\| = \hdots = \max_{x \in \mathcal{Q}_{R_0^2}^{-i}} \|x - C_0\|
\end{align}

\begin{rem}\label{R6} It is interesting to note here an overview: since for $i \to \infty$ one has $C_{k,-i} \to C_0$  from (\ref{E43}) follows that $\min_k r_{k,-i} \to^{i \to \infty} R_0$. As such, basically  $\mathcal{Q}_{R_0^2}^{-i} \to^{i \to \infty} \bar{\mathcal{B}}(C_0,R_0)$. Even more, we will show below that $r_{k,-i} \to^{\infty} R_0$. 
\end{rem} For the above remark, see Figure \ref{fig5} and \ref{fig6}.

In the following, using (\ref{E43}) we write $r_{k,0} := r_{k}$ as a function of $R_0$ and $r_{k,-i}$. We also denote by $C_{k,0}:= C_k$. 
\begin{align}\label{E46}
r_{k,-i+1}^2 &= \frac{r_{k,-i}^2}{1-\lambda} + \frac{-\lambda\cdot R_0^2}{1-\lambda} +  \frac{\lambda}{(1-\lambda)^2} \cdot \|C_0 - C_{k,-i}\|^2 \nonumber \\
r_{k,-i+2}^2 &= \frac{r_{k,-i}^2}{(1-\lambda)^2} + \sum_{p=0}^1 \frac{-\lambda\cdot R_0^2}{(1-\lambda)^{(1+p)}} + \nonumber \\
& +  \frac{\lambda}{(1-\lambda)^3}\cdot \|C_0 - C_{k,-i}\|^2 + \frac{\lambda}{(1-\lambda)^2} \cdot \|C_0 - C_{k,-i+1}\|^2 \nonumber \\
 = \frac{r_{k,-i}^2}{(1-\lambda)^2} &+ \sum_{p=0}^1 \frac{-\lambda\cdot R_0^2}{(1-\lambda)^{(1+p)}} + \sum_{p=0}^1 \frac{\lambda}{(1-\lambda)^{(1 + 2-p)}}\cdot \|C_0 - C_{k,-i+p}\|^2 \nonumber \\
r_{k,-i+3}^2 &= \frac{r_{k,-i}^2}{(1-\lambda)^3} + \frac{1}{1-\lambda}\cdot \sum_{p=0}^1 \frac{-\lambda\cdot R_0^2}{(1-\lambda)^{(1+p)}} + \frac{-\lambda\cdot R_0^2}{1-\lambda} + \nonumber \\
+ \frac{1}{1-\lambda} &\cdot \sum_{p=0}^1 \frac{\lambda}{(1-\lambda)^{(1 + 2-p)}} \|C_0 - C_{k,-i+p}\|^2 + \frac{\lambda\cdot \|C_0 - C_{-i+2}\|^2}{(1-\lambda)^2}  \nonumber \\
 &= \frac{r_{k,-i}^2}{(1-\lambda)^3} + \sum_{p=0}^2 \frac{-\lambda\cdot R_0^2}{(1-\lambda)^{(1+p)}} + \sum_{p=0}^2 \frac{\lambda\cdot \|C_0 - C_{k,-i+p}\|^2}{(1-\lambda)^{(1 + 3-p)}}  \nonumber \\
&\vdots \nonumber \\
r_{k,-i+q+1}^2 &= \frac{r_{k,-i}^2}{(1-\lambda)^{(q+1)}} + \sum_{p=0}^q \frac{-\lambda\cdot R_0^2}{(1-\lambda)^{(1+p)}} +\sum_{p=0}^q \frac{\lambda\cdot \|C_0 - C_{k,-i+p}\|^2}{(1-\lambda)^{(1 +(q+1)-p)}} \nonumber \\
&\vdots \\
r_{k,0}^2 &= \frac{r_{k,-i}^2}{(1-\lambda)^{i}} + \sum_{p=0}^{i-1} \frac{-\lambda\cdot R_0^2}{(1-\lambda)^{(1+p)}} + \sum_{p=0}^{i-1} \frac{\lambda \cdot \|C_0 - C_{k,-i+p}\|^2 }{(1-\lambda)^{(1 + i-p)}} 
\end{align} From (\ref{E42}) one gets

\begin{align}
\|C_0 - C_{-i+p}\| = (1-\lambda)^{(i-p)} \cdot \|C_0 - C_{k,0}\|
\end{align} hence 

\begin{align}\label{E48}
\sum_{p=0}^{i-1} \frac{\lambda \cdot \|C_0 - C_{k,-i+p}\|^2 }{(1-\lambda)^{(1 + (i-1)-p)}} &= \frac{\lambda}{1 - \lambda} \cdot \|C_0 - C_{k,0}\|^2 \cdot \sum_{p=0}^{i-1} (1-\lambda)^{i-p} \nonumber \\
& = \frac{\lambda}{1-\lambda} \cdot \|C_0 - C_{k,0}\|^2 \cdot \left( \frac{1 - (1-\lambda)^{i+1}}{1 - (1-\lambda)} - 1\right) \nonumber \\
& = \left(1 - (1-\lambda)^i \right)\cdot \|C_0 - C_{k,0}\|^2
\end{align} Finally, because 
\begin{align}\label{E49}
 \sum_{p=0}^{i-1} \frac{-\lambda\cdot R_0^2}{(1-\lambda)^{(1+p)}} &= -\frac{\lambda}{1-\lambda} \cdot R_0^2 \cdot \sum_{p=0}^{i-1} \frac{1}{(1-\lambda)^p} \nonumber \\
& = -\frac{\lambda}{1-\lambda} \cdot R_0^2 \cdot  \left( \frac{1 - \frac{1}{(1-\lambda)^i}}{1 - \frac{1}{1-\lambda}}\right) \nonumber \\
&= \left(1 - \frac{1}{(1-\lambda)^i} \right)\cdot R_0^2
\end{align}

From (\ref{E46}, \ref{E48}, \ref{E49}) one gets

\begin{align}\label{E50}
r_{k,0}^2 = \frac{r_{k,-i}^2 - R_0^2}{(1-\lambda)^i} + R_0^2 + \left(1- (1-\lambda)^i \right)\cdot \|C_0 - C_{k,0}\|^2
\end{align}

\begin{rem} As stipulated in Remark \ref{R6} one gets from (\ref{E50}) that
\begin{align}\label{E54c}
r_{k,-i}^2 - R_0^2 &= (1 - \lambda)^i\cdot \left(r_{k,0}^2 - R_0^2 - \left(1 - (1-\lambda)^i \right) \cdot \|C_0 - C_{k,0}\|^2 \right) \nonumber \\
& \to^{i \to \infty} 0
\end{align} which indicates that the convergence of $r_{k,-i}$ to $R_0$ is exponential.
\end{rem}

Actually, one can take any $R>0$ and compute as such the sequence of intersection of balls for any $\lambda \in (0,1)$:

\begin{align}
\hdots, \mathcal{Q}_{R^2}^{i}, \mathcal{Q}_{R^2}^{i-1},  \hdots, \mathcal{Q}_{R^2}^{1}, \mathcal{Q}_{R^2}^0 = \mathcal{Q}, \mathcal{Q}_{R^2}^{-1}, \hdots, \mathcal{Q}_{R^2}^{i-1}, \mathcal{Q}_{R^2}^{i}, \hdots
\end{align} It is obtained that $\mathcal{Q}_{R^2}^{-i} \to^{i\to \infty} \bar{\mathcal{B}}(C_0,R)$. Furthermore $\mathcal{Q}_{R^2}^i \to^{i\to \infty} \mathcal{P}_{R^2}$ where $\mathcal{P}_{R^2}$ is a polytope. For $R = R_0 = \max_{x \in \mathcal{Q}} \|x - C_0\|$ one has
\begin{enumerate}
\item $\mathcal{Q}_{R_0^2}^{i} \subseteq \mathcal{Q}_{R_0^2}^j$ for all $i,j \in \mathbb{Z}$ with $i \geq j$
\item $R_0 = \max_{x \in \mathcal{Q}_{R_0^2}^i} \|x - C_0\| = \max_{x \in \mathcal{Q}_{R_0^2}^j} \|x - C_0\|$ for all $i,j \in \mathbb{Z}$. 
\end{enumerate}

\begin{rem} \label{R9}
For any $\lambda \in (0,1)$, let us denote $\mathcal{S}_{\lambda,R} := \left( \mathcal{Q}_{R^2}^i\right)_{i \in \mathbb{Z}}$ the sequence of intersection of balls. For any given $R > 0$, it is obvious that $\mathcal{Q}_{R}^0 := \mathcal{Q}$ hence one member of the sequence is known. If any other member is known, then one can find the whole sequence (of intersection of balls). That is, finding the maximum distance $R_0$ to $C_0$ over $\mathcal{Q}$ is a particular case of finding another member if the sequence $\mathcal{S}_{\lambda,R_0}$ i.e $\mathcal{Q}_{R_0^2}^{\infty}$. 

For a given $R >0$, as a future work on this area, one might investigate the validity of the statement:
\begin{align}\label{E57c}
&R < R_0 \Rightarrow \max_{\mathcal{Q}_{R^2}^i}\|x - C_0\| < \max_{ \mathcal{Q}_{R^2}^{i+1}} \|x - C_0\| \hspace{0.5cm} \forall i \in \mathbb{Z}  \nonumber \\
&R \geq R_0 \Rightarrow \mathcal{Q}_{R^2}^{i+1} \subseteq \mathcal{Q}_{R^2}^i \hspace{0.5cm} \forall i \in \mathbb{Z} \nonumber \\
\end{align}
\end{rem}
As such, from (\ref{E57c}) we give the following equation for $R_0 $ for any $i \in \mathbb{Z}$ and $p \in \mathbb{N}\setminus \{0\}$. 
\begin{align}\label{E58c}
R_0 = \max_{x \in \mathcal{Q}} \|x - C_0\| = \min \left\{R > 0 \biggr| \frac{\text{Vol}\left( \mathcal{Q}_{R^2}^{i+p} \cap \mathcal{Q}_{R^2}^i\right)}{\text{Vol}\left( \mathcal{Q}_{R^2}^{i+p}\right)} = 1 \right\} 
\end{align} showing that the exact computation of the volume of the intersection of balls is NP-Hard. 

The equation (\ref{E58c}) allows an approximation of $R_0$ because there are polynomial complexity randomized algorithms which can compute the volume of convex bodies, see \cite{convol1}, \cite{convol2}, \cite{convol3} and references therein. 
As the limit case, one can take $\mathcal{Q}_{R^2}^{i+p} = \mathcal{Q}_{R^2}^{\infty} $ which is a polytope and $\mathcal{Q}_{R^2}^{i} = \mathcal{Q}_{R^2}^{-\infty}$ which is the ball $\mathcal{B}(C_0,R)$.

Let $\mathcal{V}(R) := \frac{\text{Vol}\left( \mathcal{Q}_{R^2}^{i+p} \cap \mathcal{Q}_{R^2}^i\right)}{\text{Vol}\left( \mathcal{Q}_{R^2}^{i+p}\right)}$ then $\mathcal{V}(R) = 1$ for all $R \geq R_0$ because in this case $\mathcal{Q}_{R^2}^{i+p} \subseteq \mathcal{Q}_{R^2}^i$ for $p > 0$ according to Remark \ref{R9}. For $R < R_0$ thou, $\mathcal{Q}_{R^2}^{i+p} \not\subseteq \mathcal{Q}_{R^2}^i$  and $\mathcal{Q}_{R^2}^{i+p} $ will have some vertices outside $\mathcal{Q}_{R^2}^{i} $ and with them some volume. 
 
\subsection{A method for approximating $R_0$}

As already said, for $R< R_0$ one has 
$\mathcal{Q}_{R_0^2}^i \subseteq \mathcal{Q}_{R^2}^i$ if $i > 0$.

That is, the max indicator intersection of balls, reaches the ball $\mathcal{B}(C_0,R)$ for $i \to -\infty$, while the max indicator intersection of balls increases (includes the previous one) as $R$ decreases from $R_0$ if $i>0$ fixed. They reach the polytope $\mathcal{Q}_{R^2}^{\infty}$ for if $i \to \infty$.

 This asymmetry will be used in this subsection to provide a randomized 
method to approximate $R_0$. 

For $i \in \mathbb{Z}$, let us denote 
\begin{align}
\mathcal{S}_{R^2}^{i} = \partial \mathcal{B}(C_0,R) \cap \mathcal{Q}_{R^2}^{i}
\end{align} that is the surface of the boundary of $\mathcal{Q}_{R^2}^{-\infty} = \mathcal{B}(C_0,R)$ that is the set $\mathcal{Q}_{R^2}^i$. It is known that for $R = R_0$ one has $\mathcal{S}_{R^2}^{i} = \mathop{\text{argmax}}_{x \in \mathcal{Q}} \|x - C_0\|$ for $R>R_0$ one has $\mathcal{S}_{R^2}^{i} = \emptyset$. For the case $R < R_0$ we have the following result.

 Let us define
\begin{align}
\mathcal{R}_i(R) = \frac{|\mathcal{S}_{R^2}^{i}|}{|\partial \mathcal{B}(C_0,R)|} 
\end{align} where by $|\partial \mathcal{B}(C_0,R)|$ we denote the surface area of the ball $\mathcal{B}(C_0,R)$ and therefore by $|\mathcal{S}_{R^2}^{i}|$ we denote the surface area of $\mathcal{B}(C_0,R)$ which is in $\mathcal{Q}_{R^2}^{i}$. 
 
\textbf{First we give a negative result for $i >0$} assuming $C_0 \in \mathcal{Q}$. We start with the following lemma:
\begin{lemma}\label{L2.3} Let $i \geq 0$ then for $R < R_0$, the following holds:
\begin{align}
\lim_{n \to \infty} \frac{\mathcal{R}_i\left(R-\frac{\alpha}{n} \right)}{ \mathcal{R}_i(R)} \geq e^{\frac{\alpha}{R}}
\end{align} where $n$ is the dimension of the space and $\alpha \geq 0$. 
\end{lemma}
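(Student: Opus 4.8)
The plan is to reduce the statement to a sharp large--deviation estimate for a spherical cap measure and then to an elementary one--variable inequality. First I would observe that $\mathcal{S}_{R^2}^i$ does not in fact depend on $i$ for $i\ge 0$: on the sphere $\partial\mathcal{B}(C_0,R)$ one has $g_\lambda(x)=\lambda R^2$, so the inequality $h(x)-g_\lambda(x)\le -\lambda R^2$ defining $\mathcal{Q}_{R^2}^1$ reduces to $h(x)\le 0$, i.e.\ to $x\in\mathcal{Q}$; applying the same observation to each successive generation of balls gives $\partial\mathcal{B}(C_0,R)\cap\mathcal{Q}_{R^2}^i=\partial\mathcal{B}(C_0,R)\cap\mathcal{Q}$ for all $i\ge 0$, hence $\mathcal{R}_i(R)=\mathcal{R}_0(R)=:\mathcal{R}(R)$ and it suffices to treat $\mathcal{Q}$. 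Writing a point of $\partial\mathcal{B}(C_0,R)$ as $C_0+Ru$ with $u$ a unit vector, the constraint $\|C_0+Ru-C_k\|^2\le r_k^2$ becomes $u^{\mathsf T}e_k\ge t_k(R)$ with $e_k=(C_k-C_0)/d_k$, $d_k=\|C_k-C_0\|$ and $t_k(R)=\frac{R^2+d_k^2-r_k^2}{2Rd_k}$, so $\mathcal{R}(R)$ is the normalized surface measure of the intersection of spherical caps $\bigcap_k\{u:u^{\mathsf T}e_k\ge t_k(R)\}$. Since $d_k\le r_k$ (as $C_0\in\mathcal{Q}$), each $t_k$ is nondecreasing in $R$, which already gives that $\mathcal{R}$ is nonincreasing and the ratio $\ge 1$; the task is to make this quantitative.

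The crux is the behaviour as $n\to\infty$. Using that a coordinate $u^{\mathsf T}e$ of a uniform point on $S^{n-1}$ has density proportional to $(1-s^2)^{(n-3)/2}$, a Laplace / large--deviation analysis should give
\begin{align}
\mathcal{R}(R)=\exp\!\Big(\tfrac{n}{2}\log\big(1-\rho(R)^2\big)+o(n)\Big),
\end{align}
where $\rho(R)^2=\min\{\|v\|^2 : v^{\mathsf T}e_k\ge t_k(R)\ \forall k\}\in(0,1]$ is the (Chebyshev--type) distance from $C_0$ to the feasible polyhedron on the sphere, and where the sub--exponential prefactor varies continuously in $R$ and hence contributes $1$ to the ratio. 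Substituting $t_k(R-\alpha/n)=t_k(R)-t_k'(R)\alpha/n+O(n^{-2})$, so that $\rho(R-\alpha/n)=\rho(R)-\rho'(R)\alpha/n+O(n^{-2})$, this yields
\begin{align}
\lim_{n\to\infty}\frac{\mathcal{R}(R-\alpha/n)}{\mathcal{R}(R)}=\exp\!\Big(\frac{\alpha\,\rho(R)\rho'(R)}{1-\rho(R)^2}\Big).
\end{align}

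It then remains to check $\frac{\rho(R)\rho'(R)}{1-\rho(R)^2}\ge\frac1R$. Multiplying out, this is $\rho(R)^2-1+R\rho(R)\rho'(R)\ge 0$, i.e.\ $\frac{d}{dR}\big(R^2(\rho(R)^2-1)\big)\ge 0$, i.e.\ the map $R\mapsto R^2\big(1-\rho(R)^2\big)$ is nonincreasing. Since $\mathcal{Q}\cap\partial\mathcal{B}(C_0,R_0)$ is exactly the set of farthest points, $\rho(R_0)=1$; writing $1-\rho(R)^2=\kappa\,(R_0-R)+o(R_0-R)$ as $R\uparrow R_0$ (with $\kappa>0$; for a single ball $1-\rho(R)^2=\frac{(r^2-(R-d)^2)((R+d)^2-r^2)}{4R^2d^2}$, making $R^2(1-\rho(R)^2)$ explicit), differentiation gives $\frac{d}{dR}\big(R^2(1-\rho(R)^2)\big)\approx \kappa R(2R_0-3R)<0$ for $R$ near $R_0$, which establishes the claim on a left--neighbourhood of $R_0$ (for one ball, precisely the range $R\ge\sqrt{d^2+r^2}$, nonempty since $R_0=d+r>\sqrt{d^2+r^2}$).

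The delicate step is the second one: making the heuristic Laplace estimate rigorous for the measure of an intersection of \emph{non-orthogonal} half-space caps on $S^{n-1}$ — identifying the correct exponential rate $\tfrac12\log(1-\rho(R)^2)$ (the rate is governed by the distance $\rho(R)$ to the whole feasible polyhedron, not simply by $\max_k t_k(R)$ when several caps interact), controlling the sub-exponential prefactor uniformly enough that it cancels in the ratio, and treating the geometry when several balls are simultaneously active at the farthest point (the feasible set on the sphere is then a spherical--simplex--like region and one must verify its measure still has rate $\tfrac12\log(1-\rho(R)^2)$). Steps one and three are routine by comparison, the only residual subtlety being that the inequality $\frac{\rho\rho'}{1-\rho^2}\ge\frac1R$ — and hence the statement as written — relies on $R$ lying in the regime (a left-neighbourhood of $R_0$, where $R\mapsto R^2(1-\rho(R)^2)$ is decreasing) in which the rate function is strictly increasing.
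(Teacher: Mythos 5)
Your route differs fundamentally from the paper's and, as you yourself concede, it does not close. The paper's proof is a two-line argument: it factors $\mathcal{R}_i(R-\alpha/n)/\mathcal{R}_i(R)$ into $\bigl|\mathcal{S}^i_{(R-\alpha/n)^2}\bigr|/\bigl|\mathcal{S}^i_{R^2}\bigr|$ times the sphere-area ratio $\bigl|\partial\mathcal{B}(C_0,R)\bigr|/\bigl|\partial\mathcal{B}(C_0,R-\alpha/n)\bigr|=\bigl(1-\tfrac{\alpha}{Rn}\bigr)^{-(n-1)}$, asserts that the first factor is at least $1$ (via the inclusion $\mathcal{S}^i_{R^2}\subseteq\mathcal{S}^i_{(R-\alpha/n)^2}$), and obtains $e^{\alpha/R}$ from the second factor alone in the limit $n\to\infty$. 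You instead try to compute the exact exponential rate of the normalized cap measure, and two genuine gaps remain. First, the pivotal estimate $\mathcal{R}(R)=\exp\bigl(\frac{n}{2}\log(1-\rho(R)^2)+o(n)\bigr)$ for an intersection of non-orthogonal caps is only asserted (``should give''); without a proof and without uniform control of the sub-exponential prefactor the computation is heuristic, and it also presupposes a family of instances in which the data $C_k,r_k$ (hence $\rho(R),\rho'(R)$) do not vary with the dimension $n$, an assumption the lemma leaves implicit but which your argument uses essentially. Second, even granting the rate formula, you verify the required inequality $\rho\rho'/(1-\rho^2)\ge 1/R$ only on a left neighbourhood of $R_0$ (equivalently where $R\mapsto R^2(1-\rho(R)^2)$ is decreasing), whereas the lemma claims the bound for every $R<R_0$; so the statement as written is not proved. (Your preliminary observation that $\mathcal{S}^i_{R^2}=\partial\mathcal{B}(C_0,R)\cap\mathcal{Q}$ for all $i\ge 0$ is correct and a pleasant simplification, but the paper's proof does not need it.)

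The comparison is nonetheless instructive, because your exact-rate computation and the paper's lower bound cannot both hold for all $R<R_0$. For a single ball with $d_1=\|C_1-C_0\|<r_1$ and threshold $t(R)=(R^2+d_1^2-r_1^2)/(2Rd_1)$, standard spherical-cap asymptotics give the limit $\exp\bigl(\alpha\, t(R)t'(R)/(1-t(R)^2)\bigr)$, which for instance at $d_1=1/2$, $r_1=1$, $R=9/10<R_0=3/2$ is about $e^{0.13\,\alpha}$, well below $e^{\alpha/R}$. The step of the paper this collides with is exactly the inclusion $\mathcal{S}^i_{R^2}\subseteq\mathcal{S}^i_{(R-\alpha/n)^2}$: these are subsets of two different spheres, and what is really being invoked is that the absolute area $\bigl|\mathcal{S}^i_{R^2}\bigr|$ does not decrease as $R$ decreases, which fails once the spherical trace of $\mathcal{Q}$ occupies a fixed fraction of the sphere (in the extreme case $\mathcal{B}(C_0,R)\subseteq\mathcal{Q}$ the trace is the whole sphere and its area shrinks with $R$). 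So the regime restriction you end up with appears to be a genuine feature of the inequality rather than an artifact of your method; still, judged as a proof of Lemma \ref{L2.3} as stated, your proposal is incomplete for the two reasons above.
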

Thinking at $\mathcal{S}_{R^2}^i$ as the reunion of "holes" in the surface of $\mathcal{B}(C_0,R)$ one understands that this is null for $R>R_0$, contains a finite number of points for $R = R_0$ (i.e. the maximzers) and for $R < R_0$ the above lemma states that the quotient between the surface of the "holes" and that of the ball grows exponentially as $R$ departs from $R_0$.  

\begin{proof}

For $R < R_0$, we evaluate 
\begin{align}
\frac{\mathcal{R}_i\left(R-\frac{\alpha}{n} \right)}{ \mathcal{R}_i(R)} = \frac{\left|\mathcal{S}_{\left (R-\frac{\alpha}{n} \right)^2}^i \right|}{|\mathcal{S}_{ R^2}^i|} \cdot \frac{|\partial \mathcal{B}\left(C_0,R \right)|}{|\partial \mathcal{B}\left(C_0,R-\frac{\alpha}{n} \right)|} 
\end{align} For the last term, it is obvious that 
\begin{align}
\frac{|\partial \mathcal{B}\left(C_0,R \right)|}{|\partial \mathcal{B}\left(C_0,R-\frac{\alpha}{n} \right)|} = \left( \frac{R}{R - \frac{\alpha}{n}} \right)^{n-1} = \left( \frac{1}{1 - \frac{\alpha}{R} \cdot \frac{1}{n}}\right)^{n-1}
\end{align} while for the first term in the product, since $\mathcal{S}_{ R^2}^i \subseteq \mathcal{S}_{\left (R-\frac{\alpha}{n} \right)^2}^i $ follows that 
\begin{align}
\frac{\left|\mathcal{S}_{\left (R-\frac{\alpha}{n} \right)^2}^i \right|}{|\mathcal{S}_{ R^2}^i|} \geq 1  \Rightarrow \frac{\mathcal{R}_i\left(R-\frac{\alpha}{n} \right)}{ \mathcal{R}_i(R)} \geq \left( \frac{1}{1 - \frac{\alpha}{R} \cdot \frac{1}{n}}\right)^{n-1}
\end{align}

Finally, evaluate the term

\begin{align}
\left( \frac{1}{1 - \frac{\alpha}{R} \cdot \frac{1}{n}}\right)^{n-1}  = \left(\left(1 - \frac{\alpha}{R} \cdot \frac{1}{n}\right)^{n\cdot \frac{R}{\alpha}}\right)^{\frac{-\alpha}{R} \cdot \frac{n-1}{n}} \to^{n \to \infty} e^{\frac{\alpha}{R}} 
\end{align}

\end{proof}
%
%
%
%
%

Taking a point $\hat{x}$ on the surface of $\mathcal{B}(C_0,R)$, the quantity  $\mathcal{R}_i(R)$ gives the probability of  having $\hat{x} \in \mathcal{Q}_{R^2}^i$. For $R \geq R_0$ this probability is zero, but as $R$ drops below $R_0$ Lemma \ref{L2.3} shows that this increases exponentially with $\alpha$ where $\alpha = n\cdot (R_0 - R)$.  

We propose the following algorithm for computing $R_0$.

\begin{algorithm} 
\caption{Procedure B}\label{procB}
 This procedure gives a method to approximate $R_0$. 
\begin{algorithmic}[1]
\Require $R < R_0$ and $i \in \mathbb{Z}$ and $N > 0$
 
\State $\mathcal{D} \gets \{ \hat{x}_1, \hdots, \hat{x}_N\}$ with $\hat{x}_k \in \partial \mathcal{B}(C_0,R)$ with uniform distribution. 
\While{$\mathcal{D} \cap \mathcal{Q}_{R^2}^i \neq \emptyset $}
\State Increase $R$
\State $\mathcal{D} \gets \{ \hat{x}_1, \hdots, \hat{x}_N\}$ with $\hat{x}_k \in \partial \mathcal{B}(C_0,R)$ with uniform distribution. 
\EndWhile

\end{algorithmic}

\end{algorithm}

From Lemma \ref{L2.3} follows that for any $\epsilon > 0$ exists $ n_{\epsilon}$ such that for all $n \geq n_{\epsilon}$
\begin{align}
\frac{\mathcal{R}_i\left( R - \frac{\alpha}{n}\right)}{\mathcal{R}_i(R)} \geq e^{\frac{\alpha}{R}} - \epsilon
\end{align} For $\beta \in (0,1)$, let $\alpha = \beta\cdot R \cdot n$ then 
\begin{align}\label{E67c}
\mathcal{R}_i(R) \leq \frac{1}{e^{\beta \cdot n} - \epsilon} \cdot \mathcal{R}_i(R - \beta \cdot R) 
\end{align} Note that exists $\beta \in (0,1)$ such that $\mathcal{R}_i((1-\beta) \cdot R) = 1$. Indeed, for 
\begin{align} \mathcal{R}_i( R) = \frac{|\mathcal{S}_{R^2}^{i}|}{|\partial \mathcal{B}(C_0,R)|} = \frac{|\mathcal{Q}_{R^2}^{i} \cap \partial \mathcal{B}(C_0,R)|}{|\partial \mathcal{B}(C_0,R)|}
\end{align} note that $\mathcal{Q} \subseteq \mathcal{Q}_{0^2}^i$ and as $R$ decreases $\mathcal{B}(C_0,R) \subseteq \mathcal{Q}$ eventually because $C_0 \in \mathcal{Q}$. From  (\ref{E67c}) follows that 

\begin{align}
\mathcal{R}_i(R) \leq \frac{1}{e^{\beta \cdot n} - \epsilon} 
\end{align} hence applying Algorithm \ref{procB} is hopeless because the probability of randomly taking a point on the surface of $\mathcal{B}(C_0,R)$ in $\mathcal{Q}_{R^2}^i$ is exponentially low. 

\textbf{Next, we motivate applying the Algorithm \ref{procB} for $i < 0$.} We recall from the previous sections that for $i > 0$ and $R>0$, the radii of the intersecting balls forming $\mathcal{Q}_{R^2}^{-i}$ are $r_{k,-i}$ with, see (\ref{E54c}) 

\begin{align}\label{E70c}
r_{k,-i}^2 - R^2 &= (1 - \lambda)^i\cdot \left(r_{k,0}^2 - R^2 - \left(1 - (1-\lambda)^i \right) \cdot \|C_0 - C_{k,0}\|^2 \right)
\end{align} and the centers are, see (\ref{E46c})

\begin{align}\label{E71c}
C_{k,-i} = (1-\lambda)^i \cdot C_{k,0} + \left( 1 - (1 - \lambda)^i \right)\cdot C_0
\end{align}

Consider the following geometry problem: a ball $\mathcal{B}(D_0,\rho)$ and $m$ other balls $\mathcal{B}(D_k,\rho_k)$ with $k \in \{1, \hdots, m\}$. Define the function:

\begin{align}\label{E73c}
f_k(\rho) = \frac{\rho_k - \rho}{\|D_k - D_0\|} 
\end{align} Note that if $f_k(\rho) \geq 1$ for all $k$ then $\mathcal{B}(D_0,\rho) \subseteq \bigcap_{k=1}^m\mathcal{B}(D_k,\rho_k)$, while if $f_k(\rho) \leq -1$ for all $k$ then $\bigcap_{k=1}^m\mathcal{B}(D_k,\rho_k) \subseteq \mathcal{B}(D_0,\rho) $


As such we note that the larger the value of $f_k(\rho)$ (for each $k$) is, the larger the area of $\partial \mathcal{B}(D_0,\rho)$ in the set $\bigcap_{k=1}^m\mathcal{B}(D_k,\rho_k)$.

We investigate:

\begin{align}\label{E72c}
\frac{r_{k,-i}^2 - R^2}{\|C_{k,-i} - C_0\|} = \frac{r_{k,0}^2 - R^2 - \left(1 - (1-\lambda)^i \right) \cdot \|C_0 - C_{k,0}\|^2 }{ \| C_{k,0} - C_0\|}
\end{align}  

On the above lines, we give our final lemma:

\begin{lemma}\label{L2.4}
For $k \in \{1, \hdots, m\}$ and $|i|$ large enough, if $r_{k,0}^2 - R_0^2 - \|C_{k,0} - C_0\|^2 \geq 0$  then for all $0 < R \leq R_0$ one has 
\begin{align}
 \frac{r_{k,-i}^2 - R^2}{\|C_{k,-i} - C_0\|}  <   \frac{r_{k,-i-1}^2 - R^2}{\|C_{k,-i-1} - C_0\| } 
\end{align}

\end{lemma}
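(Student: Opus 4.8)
The plan is to track the quantity $\phi_i := \dfrac{r_{k,-i}^2 - R^2}{\|C_{k,-i} - C_0\|}$ as $i$ grows and show it is eventually strictly increasing in $i$. First I would use the closed forms already derived: from \eqref{E70c}, $r_{k,-i}^2 - R^2 = (1-\lambda)^i\bigl(r_{k,0}^2 - R^2 - (1-(1-\lambda)^i)\|C_0 - C_{k,0}\|^2\bigr)$, and from the line following \eqref{E46c}, $\|C_{k,-i} - C_0\| = (1-\lambda)^i\|C_{k,0} - C_0\|$. Substituting both into $\phi_i$, the factor $(1-\lambda)^i$ cancels, giving the identity already recorded in \eqref{E72c}:
\begin{align}
\phi_i = \frac{r_{k,0}^2 - R^2 - \left(1 - (1-\lambda)^i\right)\|C_0 - C_{k,0}\|^2}{\|C_{k,0} - C_0\|}.
\end{align}
So the whole question reduces to the monotonicity in $i$ of the numerator $N_i := r_{k,0}^2 - R^2 - (1-(1-\lambda)^i)\|C_0 - C_{k,0}\|^2$, since the denominator is a fixed positive constant.

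Next I would compute $N_{i+1} - N_i = \bigl((1-\lambda)^{i+1} - (1-\lambda)^i\bigr)\|C_0 - C_{k,0}\|^2 = -\lambda(1-\lambda)^i\|C_0 - C_{k,0}\|^2$. Wait — this is negative, which would make $\phi_i$ \emph{decreasing}, not increasing. So the direction of the claimed inequality must hinge on the sign convention for $i$: in the lemma statement $\phi_{-i}$ versus $\phi_{-i-1}$ corresponds in my indexing (where $-i$ is the $i$-th backward step) to comparing step $i$ with step $i+1$, i.e. we want $\phi_i < \phi_{i+1}$, which as just shown is \emph{false} in general from the exact formula alone. Hence the hypothesis $r_{k,0}^2 - R_0^2 - \|C_{k,0} - C_0\|^2 \ge 0$ and the qualifier ``$|i|$ large enough'' must be doing real work, and I would look more carefully: for $|i|$ large, $(1-\lambda)^i \to 0$, so $N_i \to r_{k,0}^2 - R^2 - \|C_0 - C_{k,0}\|^2$ monotonically, and the monotone approach is from \emph{above} if $\|C_0 - C_{k,0}\|^2 > 0$. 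The point of the sign hypothesis is presumably to ensure the limit value is nonnegative so that $\phi_i$ stays on one side, and then the correct reading of the lemma's indexing gives $r_{k,-i-1}$ one step \emph{deeper} (closer to $C_0$), i.e. $\phi$ evaluated with a smaller power of $(1-\lambda)$.

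Therefore the core of the proof is: (i) substitute the closed forms to reduce to comparing $N_i$ at consecutive indices; (ii) observe $N_{i+1} - N_i$ has a fixed sign $-\operatorname{sgn}\bigl(\lambda(1-\lambda)^i\bigr)\|C_0 - C_{k,0}\|^2$; (iii) reconcile with the lemma's $-i$ indexing so the inequality comes out in the stated direction; (iv) invoke $r_{k,0}^2 - R_0^2 - \|C_{k,0} - C_0\|^2 \ge 0$ together with $0 < R \le R_0$ — which gives $r_{k,0}^2 - R^2 - \|C_{k,0}-C_0\|^2 \ge r_{k,0}^2 - R_0^2 - \|C_{k,0}-C_0\|^2 \ge 0$, so the limiting numerator is nonnegative — to conclude that for $|i|$ large enough the strict inequality in the stated orientation holds. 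The main obstacle I anticipate is not any hard estimate but getting the index bookkeeping exactly right: pinning down precisely which of $C_{k,-i}$, $C_{k,-i-1}$ carries the larger power of $(1-\lambda)$, and confirming that the sign hypothesis plus $R \le R_0$ is exactly what forces the monotone tail of $\phi$ to move in the direction claimed; a secondary point is handling the degenerate case $C_{k,0} = C_0$ (where the denominator vanishes and the statement is vacuous or must be excluded).
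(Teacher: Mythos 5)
Your reduction is exactly the one the paper uses: substitute (\ref{E70c}) and (\ref{E71c}), cancel the common factor $(1-\lambda)^i$, and compare $b_k + (1-\lambda)^i\,\|C_{k,0}-C_0\|$ at consecutive indices, where $b_k = \frac{r_{k,0}^2 - R^2 - \|C_{k,0}-C_0\|^2}{\|C_{k,0}-C_0\|}$; your observation that this quantity is strictly decreasing in $i$ is correct. The gap is in step (iii) of your plan: there is no index bookkeeping left to reconcile. The indexing in (\ref{E70c})--(\ref{E71c}) is unambiguous --- $r_{k,-i}^2 - R^2$ carries $(1-\lambda)^i$ and $r_{k,-i-1}^2 - R^2$ carries $(1-\lambda)^{i+1}$ --- so the left side of the lemma equals $b_k + (1-\lambda)^{i}\|C_{k,0}-C_0\|$ and the right side equals $b_k + (1-\lambda)^{i+1}\|C_{k,0}-C_0\|$, and the left side is strictly larger whenever $C_{k,0}\neq C_0$. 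Moreover, the stated hypothesis combined with $R \le R_0$ forces $b_k \ge 0$, so both sides are positive and the hypothesis cannot flip the comparison, contrary to the hope expressed in your step (iv). As written, your proposal therefore cannot terminate in the printed inequality, and you correctly sensed this but did not resolve it.

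For comparison, the paper's own proof performs the same substitution and forms the ratio of the two sides, but then argues the ratio is $<1$ for $|i|$ large under the assumption $r_{k,0}^2 - R_0^2 - \|C_{k,0}-C_0\|^2 \le 0$ (the opposite sign of the hypothesis in the lemma statement), from which it asserts $b_k \le 0$ (which for $R \le R_0$ does not even follow), and it finally passes to absolute values in the case where both numerator and denominator are negative. What that argument actually yields is a comparison of magnitudes in a different sign regime: a ratio less than $1$ of two negative quantities means the left-hand side is the \emph{larger} signed quantity, with the smaller absolute value. So the obstruction you hit is genuine rather than a bookkeeping artifact: read literally with its stated hypothesis, the lemma's inequality contradicts the closed-form computation, and a correct write-up must either reverse the inequality (equivalently, state it for the absolute values $\left| \frac{r_{k,-i}^2 - R^2}{\|C_{k,-i}-C_0\|} \right|$ under a negativity assumption) or change the sign hypothesis; your plan's deferral to the $-i$ indexing does not close this gap, and your degenerate-case remark about $C_{k,0}=C_0$ is the only part of step (iv) that survives as stated.
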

The lemma shows that for a fixed $R$, the value of $\frac{r_{k,-i}^2 - R^2}{\|C_{k,-i} - C_0\|}$, for a given $k$, increases as $|i|$ increases. 
\begin{proof}
 Indeed, from (\ref{E72c}) one has:

\begin{align}
\frac{\frac{r_{k,-i}^2 - R^2}{\|C_{k,-i} - C_0\|}}{ \frac{r_{k,-i-1}^2 - R^2}{\|C_{k,-i-1} - C_0\| }} &= \frac{\frac{r_{k,0}^2 - R^2 - \left(1 - (1-\lambda)^i \right) \cdot \|C_0 - C_{k,0}\|^2 }{ \| C_{k,0} - C_0\| }}{\frac{r_{k,0}^2 - R^2 - \left(1 - (1-\lambda)^{i+1} \right) \cdot \|C_0 - C_{k,0}\|^2 }{ \| C_{k,0} - C_0\| }} \nonumber \\
&= \frac{(1-\lambda)^i \cdot \| C_{k,0} - C_0\| + b_k}{(1-\lambda)^{i+1} \cdot \| C_{k,0} - C_0\| + b_k} 
\end{align} where $b_k = \frac{r_{k,0}^2 - R^2 - \|C_{k,0} - C_0\|^2}{\|C_{k,0} - C_0\|} $

For the problems where $r_{k,0}^2 - R_0^2 - \|C_{k,0} - C_0\|^2 \leq 0$ follows that $b_k \leq 0$ hence it is obvious that 
\begin{align}
 \frac{(1-\lambda)^i \cdot \| C_{k,0} - C_0\| + b_k}{(1-\lambda)^{i+1} \cdot \| C_{k,0} - C_0\| + b_k} < 1
\end{align} for large enough $|i|$, because $(1-\lambda)^i \geq (1-\lambda)^{i+1}$ since $(1-\lambda) \in (0,1)$.

From (\ref{E70c}) follows that if $r_{k,-i}^2 \leq R^2$ then so is $r_{k,-i-1}^2$ hence 
\begin{align}
\frac{\frac{r_{k,-i}^2 - R^2}{\|C_{k,-i} - C_0\|}}{ \frac{r_{k,-i-1}^2 - R^2}{\|C_{k,-i-1} - C_0\| }} = \frac{  \left|\frac{r_{k,-i}^2 - R^2}{\|C_{k,-i} - C_0\|} \right|}{  \left|\frac{r_{k,-i-1}^2 - R^2}{\|C_{k,-i-1} - C_0\| } \right|}
\end{align} hence the lemmas conclusion follows. 

\end{proof}

We end the section with the following remarks:
\begin{rem}
Although the quantity in the Lemma \ref{L2.4} although is not exactly the same as in (\ref{E73c}) it still shows an improvement in the directions related to (\ref{E73c}), since $r_{k,-i}^2 - R^2 = (r_{k,-i} - R)\cdot (r_{k,-i} + R)$. 


\end{rem}

\section{Numerical Results}
In this section we show the application of Theorem \ref{T1} for an intersection of balls. In Figure \ref{fig1} and the close-up Figure \ref{fig2} one sees the intersection of balls with green, the max indicators ball-polyedra with blue for different values of $R$. With magenta is the intersection of balls which first enters the intersection of balls. As predicted by the presented theory, the parameter $R$ for which this happens is the maximum distance to the given point over the green intersection of balls. 

	 \begin{figure}[h]
  \includegraphics[width = 9cm]{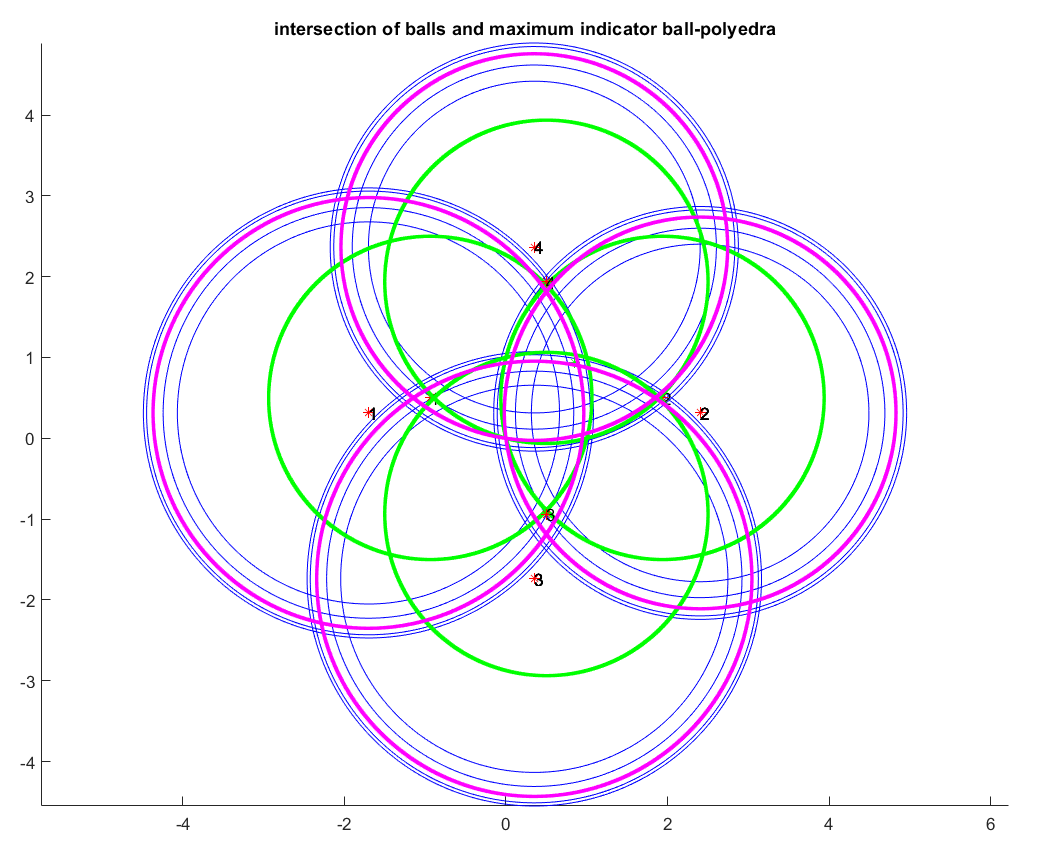}
  \caption{ The given intersection of balls with green, few instances of the family of max indicator ball polyedra with blue, the member of the family corresponding to the maximum distance }
  \label{fig1}
\end{figure}

 \begin{figure}[h]
  \includegraphics[width = 9cm]{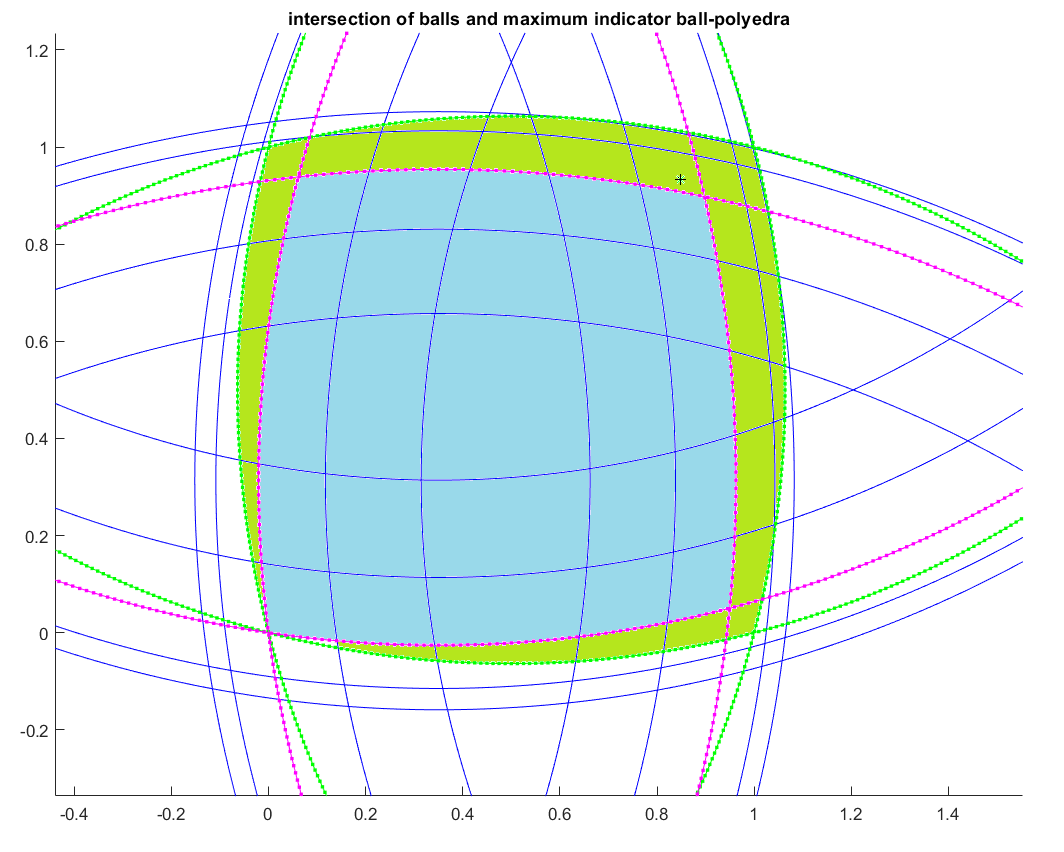}
  \caption{ Close-up and enhancement of Figure \ref{fig1}. The given intersection of balls with green filling and green boundary, few instances of the family of max indicator ball polyedra with blue boundary and the member of the family corresponding to the maximum distance with cyan filling and magenta boundary. The point $C_0$ is the black cross. }
  \label{fig2}
\end{figure}

In Figure \ref{fig4} and its close-up Figure \ref{fig5} one can see the ball polyedra obtained by applying Procedure A to the given intersection of balls. The initial intersection of balls is depicted with green. After one step of Procedure A, the centers of the blue balls are obtained. We plot them knowing the correct $R_0$. Successive applications result in successively farther centers and larger radii. It is visible that $C_0$ remains in the convex hull of the balls centers. 
   \begin{figure}[h]
  \includegraphics[width = 9cm]{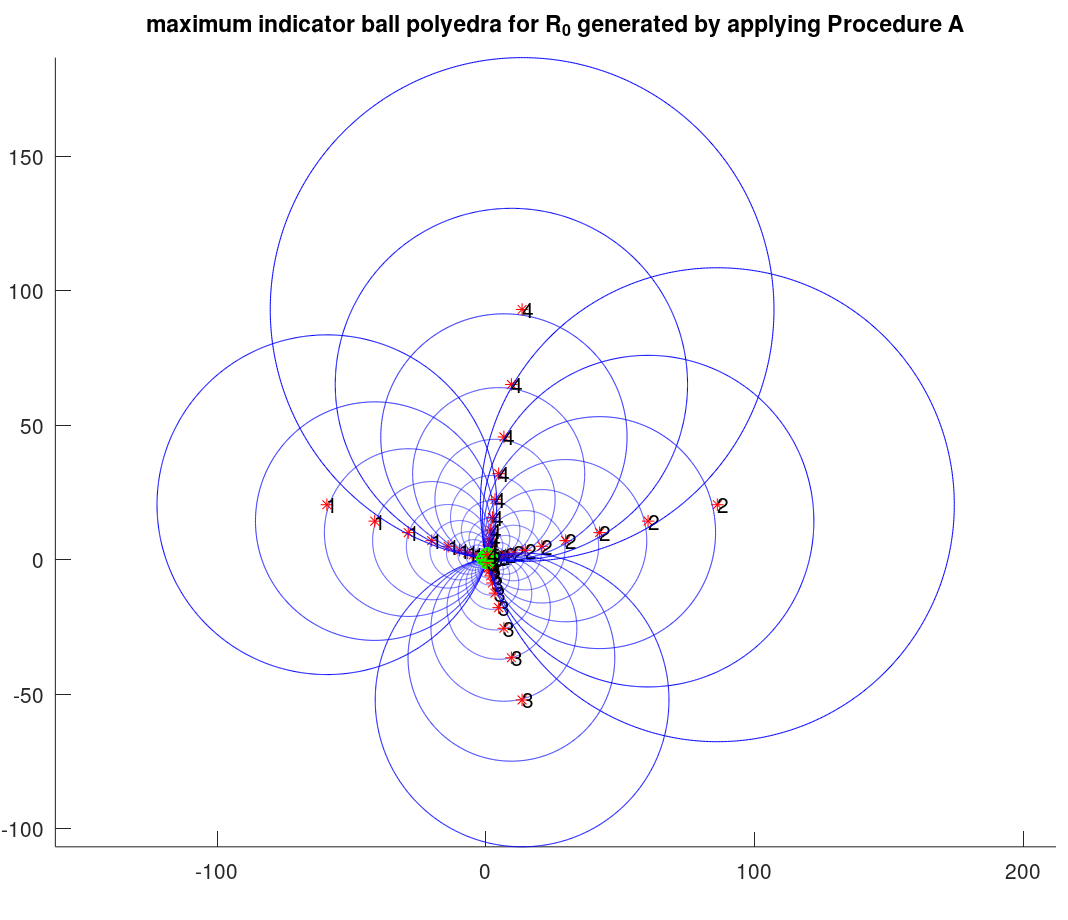}
  \caption{ The given intersection of balls with green, the centers generated by Procedure A with red. These can be obtained without the knowledge of the max distance $R_0$, although we plot the circles with the correct radius. }
  \label{fig3}
\end{figure}

  \begin{figure}[h]
  \includegraphics[width = 9cm]{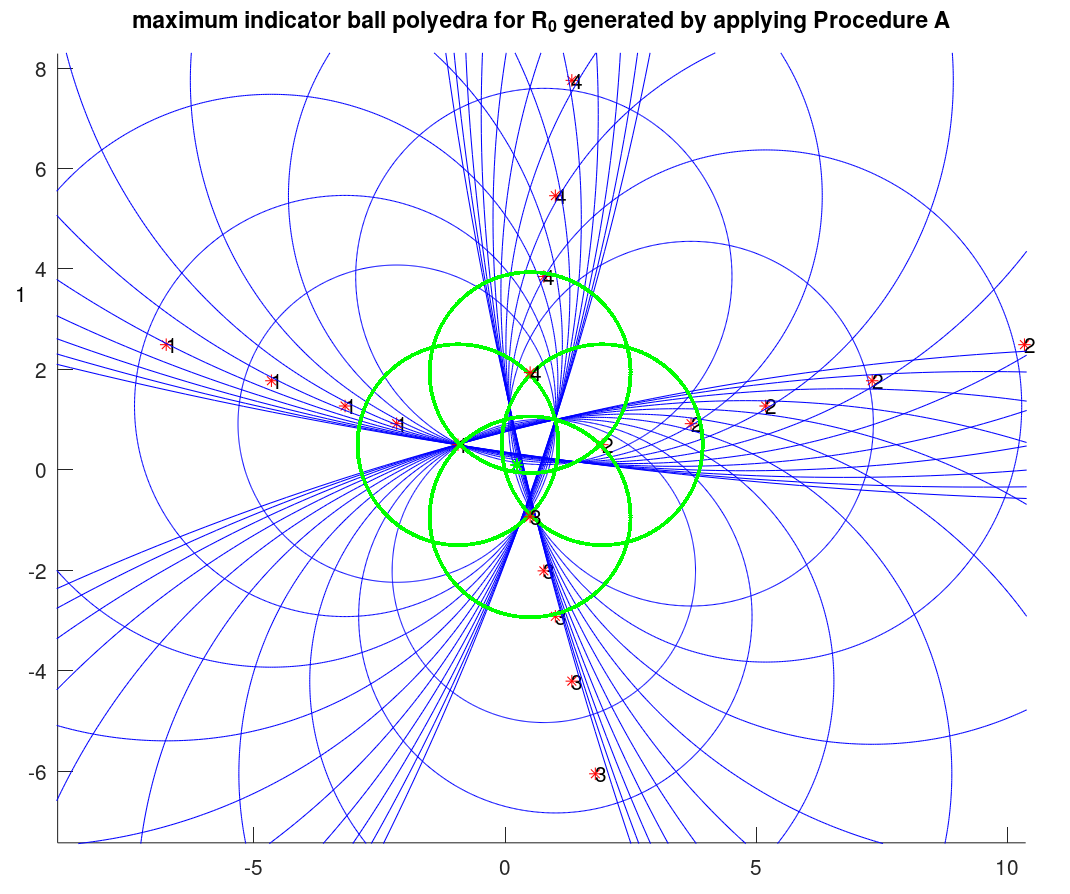}
  \caption{ Close-up of Figure \ref{fig3}. The formed polytope is $\mathcal{Q}_{R_0}^{\infty}$}
  \label{fig4}
\end{figure}

In in Figure \ref{fig5} and its close-up Figure \ref{fig6} one can see intersection of balls (with blue) to which if one applies Procedure A the given intersection of balls (with green) is obtained. As a limit case these are all generated form the smallest ball (with magenta) centered in $C_0$ enclosing the initial intersection of balls. Note the validity of Remark \ref{R6}. 
 \begin{figure}[h]
  \includegraphics[width = 9cm]{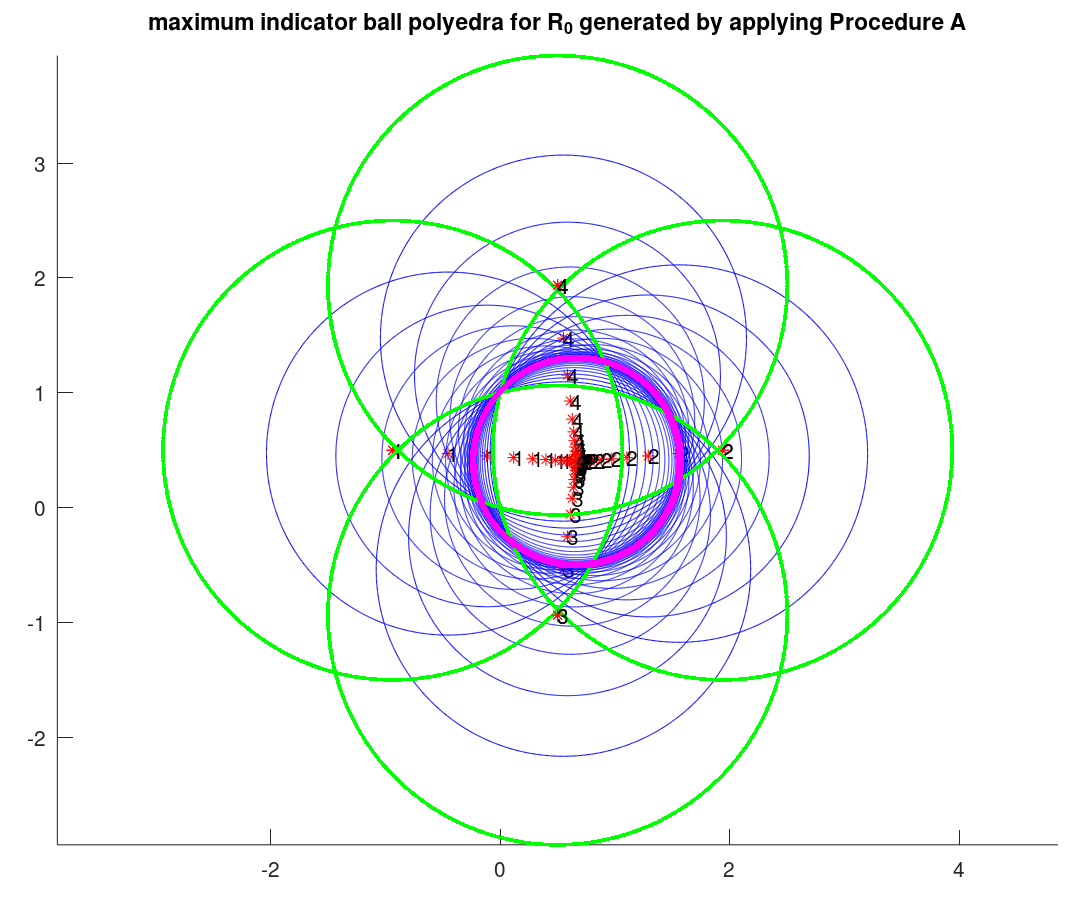}
  \caption{ The given intersection of balls with green, the limit case with magenta and with blue intersection of balls which evolve under Procedure A into the green intersection of balls. Note the confirmation of Remark \ref{R6} }
  \label{fig5}
\end{figure}

  \begin{figure}[h]
  \includegraphics[width = 9cm]{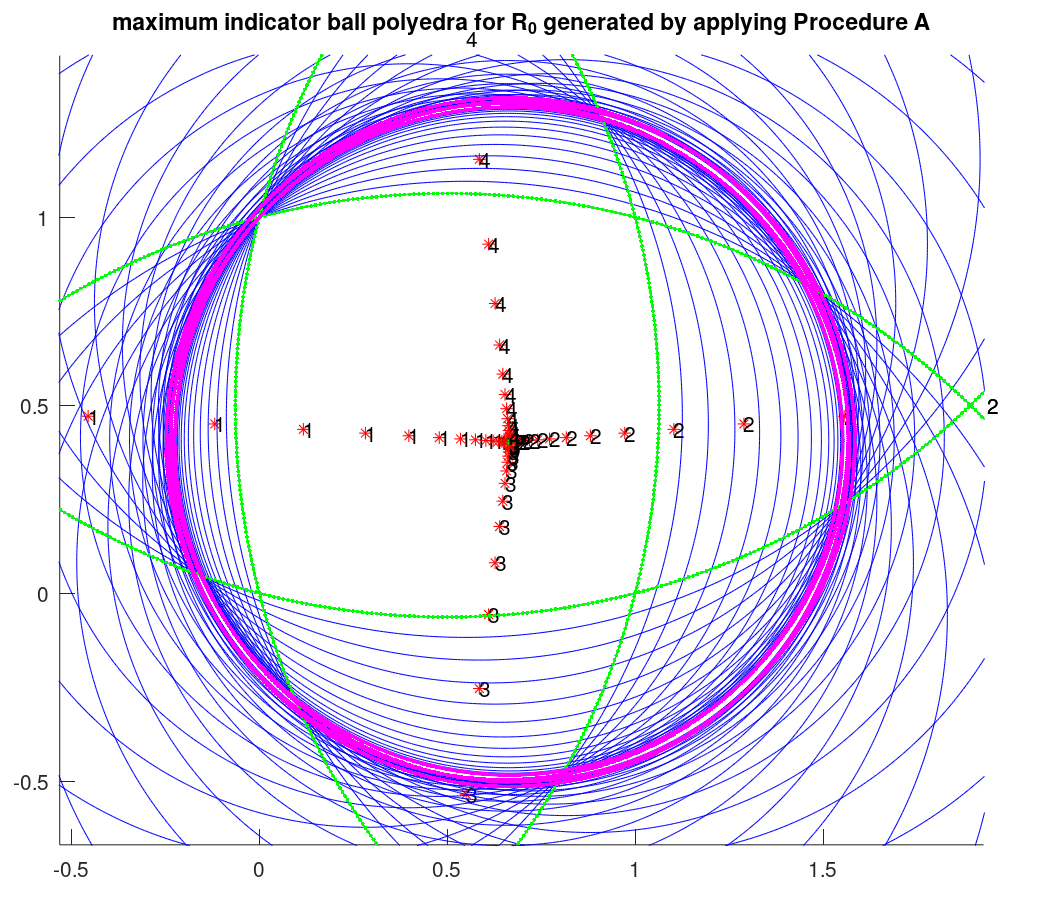}
  \caption{ Close-up of Figure \ref{fig5}. With magenta are the last balls (which have the centers closest to $C_0$) The obtained ball is $\mathcal{Q}_{R_0}^{-\infty}$ }
  \label{fig6}
\end{figure}
%
%

\section{Conclusion and future work}

We presented in this paper a method for approximating the maximum distance $R_0$ over an intersection of balls $\mathcal{Q}$ to a given point $C_0$. Starting from the given problem we construct a sequence of larger intersection of balls which preserve the maximizers to the given problem, then propose an algorithm which gives an upper bound to the maximum distance by maximizing over a particulary constructed, larger intersection of balls. The later problem is shown to allow a polynomial algorithm to obtain the solution. Then we also provide a method to compute a lower bound for the maximum distance.

From the proof of the Theorem \ref{T1}, as future work, we propose using the presented framework to analyze the problem: $\max_{x\in \mathcal{P}} \|x - C_0\|$ where $C_0 \in \mathbb{R}^n$ and $\mathcal{P}$ is a polytope. Assume 
\begin{align}
\mathcal{P} = \left\{ x | \begin{bmatrix} A_1^T\\ \vdots\\ A_m^T \end{bmatrix} \cdot x + \begin{bmatrix} b_1\\ \vdots \\ b_m\end{bmatrix} \succeq 0_{m\times 1}\right\}
\end{align} with $A,B$ matrices of appropriate size. Let $h(x) = t\cdot \sum_{k=1}^m \log\left( \frac{1}{A_k^T\cdot x + b_k}\right)$ and $g(x) = \log\left( \|x - C_0\|^2 + 1\right)$ with $t > 0$. Consider the optimization problem:
\begin{align}
\max_{x \in \{x | h(x) \leq 0\}} g(x)
\end{align} For this, as in the proof of Theorem \ref{T1} consider the function $h(x) - g(x)$. We can choose $t$ large enough such that $h-g$ is convex, see Figure \ref{fig8}. Letting $\mathcal{H}^{\star} = \min_{h(x)\leq 1} h(x) - g(x)$ for $\mathcal{H}^{\star} \in \{x | h(x)\leq 0\}$ one should study if a similar argument to the theorem proof can be made which says that the maximum of the function $g$ over $h(x)\leq 0$ is the smallest $R>0$ for which the set $\mathcal{P}_{R^2}$ enters the set $\{x | h(x) \leq 0\}$, where $\mathcal{P}_{R^2} = \{x | h(x) - g(x) \leq -\log(1 + R^2)\}$. To be analyzed as well if $\mathcal{P}_{R^2} \subseteq \{x | h(x) \leq 0\}$ can be characterized by $\mathcal{P}_{R^2} \subseteq \mathcal{P}$. 

 \begin{figure}[h]
  \includegraphics[width = 9cm]{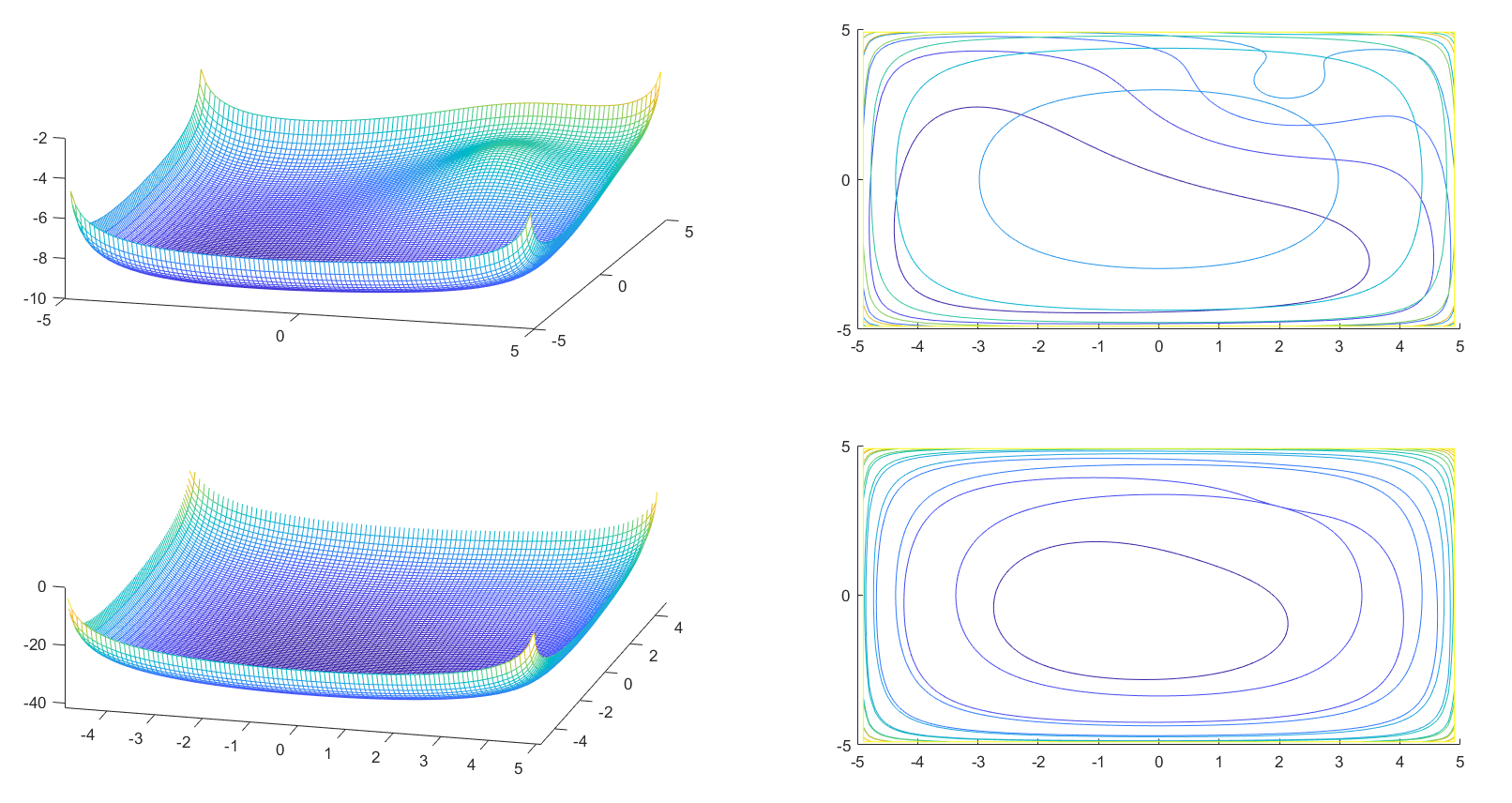}
  \caption{ The function $h(x) - g(x)$ for $t = 1$ then the same function for $t \gg 1$ for some example function. It shows that the function can be made convex.}
  \label{fig8}
\end{figure}

%


\begin{thebibliography}{99}



\bibitem{ballPoly1} K. Bezdek, Z. Langi, M. Naszodi and P. Papez.
\newblock \emph{Ball-Polyhedra. Discrete Comput Geom 38, 201–230 (2007)} https://doi.org/10.1007/s00454-007-1334-7

\bibitem{bookballPoly1} K. Bezdek
\newblock \emph{Classical Topics in Discrete Geometry}, CMS Books in Mathematics, Springer, New York, 2010

\bibitem{bookballPoly2} K. Bezdek
\newblock \emph{Lectures on Sphere Arrangements - the Discrete Geometric Side}, Fields Institute Monographs, Springer, New York, 2013

\bibitem{bookballPoly3} H. Martini, L. Montejano and D. Oliveros
\newblock \emph{Bodies of Constant Width - An Introduction to Convex Geometry with Applications}, Birkhauser, Cham, 2019

\bibitem{funcos1} M. Costandin
\newblock \emph{On computing the maximum distance to a  fixed point over an intersection of balls}
\newblock accepted to Studia Scientiarum Mathematicarum Hungarica: Combinatorics, Geometry and Topology. Under review 2'nd round. Online draft here \url{https://papers.ssrn.com/sol3/papers.cfm?abstract_id=4432464}

\bibitem{funcos2} M. Costandin
\newblock \emph{On Maximizing the Distance to a Given Point over an Intersection of Balls II}
\newblock \url{https://arxiv.org/abs/2307.13015}


\bibitem{convol1} Dyer, Martin; Frieze, Alan; Kannan, Ravi
\newblock "A random polynomial-time algorithm for approximating the volume of convex bodies"
\newblock Journal of the ACM, 38 (1): 1–17, doi:10.1145/102782.102783, (1991)

\bibitem{convol2} Lovász, L., Simonovits, M.
\newblock "Random walks in a convex body and an improved volume algorithm"
\newblock Random Structures \& Algorithms, 4 (4): 359–412, doi:10.1002/rsa.3240040402, (1993)

\bibitem{convol3} Lovász, L.; Vempala, S.
\newblock "Simulated annealing in convex bodies and an 
${\displaystyle O^{*}(n^{4})}$ volume algorithm"
\newblock Journal of Computer and System Sciences, 72 (2): 392–417, doi:10.1016/j.jcss.2005.08.004, (2006)

\bibitem{maxConv2} Zwart, P. B
\newblock Global maximization of a convex function with linear inequality constraints
\newblock Operations Research, 22(3):602–609 

\bibitem{maxConv2} Zwart, P. B
\newblock Global maximization of a convex function with linear inequality constraints
\newblock Operations Research, 22(3):602–609 

\bibitem{MDan} M. Danilova, P. Dvurechensky, A. Gasnikov, E. Gorbunov, S. Guminov, D. Kamzolov, I. Shibaev
\newblock{Recent Theoretical Advances in Non-Convex Optimization}
\newblock \url{https://arxiv.org/abs/2012.06188}

\bibitem{DW} D. Wojtczak
\newblock{On Strong NP-Completeness of Rational Problems}
\newblock \url{https://arxiv.org/abs/1802.09465}

\bibitem{enhbat} R. Enhbat
\newblock{An Algorithm for Maximizing a Convex Function over a Simple Set}
\newblock Journal of Global Optimization 8: 37 391, 1996 Kluwer Academic Publishers

\bibitem{ML0} T. A. Almaadeed, S. A. Karbasy, M. Salahi and A. Hamdi
\newblock On Indefinite Quadratic Optimization over the Intersection of Balls and Linear Constraints
\newblock  J Optim Theory Appl 194, 246–264 (2022) \url{https://doi.org/10.1007/s10957-022-02018-x}

\bibitem{ML1} A. Nemirovski, C. Roos, T. Terlaky
\newblock On maximization of quadratic form over intersection of
ellipsoids with common center
\newblock Math. Program., Ser. A 86: 463–473 (1999)

\bibitem{ML2} Y. Xia, M. Yang, S. Wang
\newblock Chebyshev Center of the Intersection of Balls:
Complexity, Relaxation and Approximation
\newblock \url{https://arxiv.org/pdf/1901.07645.pdf}

\bibitem{jogo_beck_q} A. Beck
\newblock On the convexity of a class of quadratic mappings and its application to the problem of finding the smallest ball enclosing a given intersection of balls
\newblock \emph{J Glob Optim (2007) 39:113–126
DOI 10.1007/s10898-006-9127-8}

\bibitem{jogo_beck_q2} A. Beck, D. Pan
\newblock A branch and bound algorithm for nonconvex quadratic optimization with ball and linear constraints
\newblock \emph{J Glob Optim DOI 10.1007/s10898-017-0521-1}, 2017

\bibitem{jogo_ellipq} P. L. DE Angelis, I. M. Bomze and G. Toraldo
\newblock Ellipsoidal Approach to Box-Constrained Quadratic Problems
\newblock \emph{Journal of Global Optimization 28: 1–15, 2004}, 2004 Kluwer Academic Publishers. Printed in the Netherlands

\bibitem{jogo_q_gao} D. Y. Gao, N. Ruan
\newblock Solutions to quadratic minimization problems with box and integer constraints
\newblock \emph{J Glob Optim (2010) 47:463–484}, \url{DOI 10.1007/s10898-009-9469-0}

\bibitem{jogo_q_box3} L. T. H. An and P. D. Tao
\newblock A Branch and Bound Method via d.c. Optimization Algorithms and Ellipsoidal Technique for Box Constrained Nonconvex Quadratic Problems
\newblock \emph{Journal of Global Optimization 13: 171–206, 1998}, 1998 Kluwer Academic Publishers. Printed in the Netherlands.

\bibitem{CB1} Károly Bezdek
\newblock Volumetric bounds for intersections of congruent balls in Euclidean spaces
\newblock Aequationes mathematicae volume 95, pages653–665 (2021)

\bibitem{CB2} Károly Bezdek
\newblock On the intrinsic volumes of intersections of congruent balls
\newblock Discrete Optimization Volume 44, Part 1, May 2022, 100539
%
%
%

\bibitem{DP3} H. Kellerer, R. Mansini, U. Pferschy, M. G. Speranza
\newblock An efficient fully polynomial approximation scheme for the Subset-Sum Problem
\newblock \emph{ Journal of Computer and System Sciences 66 (2003) 349–370}

\bibitem{DP4} V. V. Curtis, C. A. Sanches, 
\newblock A low-space algorithm for the subset-sum problem on GPU
\newblock \emph{Computers \& Operations Research. 83: 120–124}

\bibitem{sahni} S. Sahni
\newblock Computationally Related Problems
\newblock \emph{SIAM J Comput}, vol. 3, nr. 4, 1974

\bibitem{ref1} B. T. Polyak
\newblock Minimization Of Unsmooth Functionals
\newblock \emph{Moscow 1968}

\bibitem{ref2} B. T Polyak 
\newblock A general method for solving extremal problems.
\newblock DokE. Akad. Nauk SSSR. 174, 1, 33-36, 1967.

\bibitem{ref3} B.T. Polyak
\newblock Introduction to Optimization
\newblock Optimization Software New York

\bibitem{ref4} N. Parikh and S. Boyd
\newblock Proximal algorithms
\newblock Foundations and Trends in Optimization 1 123–231, 2013

\bibitem{stan_notes} S. Boyd
\newblock Subgradient Methods
\newblock Notes for EE364b, Stanford University, Spring 2013–14

\bibitem{BoydGhaouiFeron94}
S. Boyd, L. El Ghaoui, E. Feron and V. Balakrishnan
\newblock  Linear Matrix Inequalities in System and Control Theory
\newblock \emph{Society for Industrial and Applied Mathematics}, 1994

\bibitem{Floudas95} C.A. Floudas and  V. Visweswaran
\newblock Quadratic optimization
\newblock In: \emph{Handbook of global optimization}, pp. 217-269. Springer, 1995

\bibitem{ellipsoid_alg} R. G. Bland, D. Goldfarb and M. J. Todd
\newblock The Ellipsoid Method: A Survey
\newblock \emph{Cornell University, Ithaca, New York}, 1981

\bibitem{BauschkeBorwein1996} H. Bauschke, J. M. Borwein
\newblock{On Projection Algorithms for Solving Convex Feasibility Problems}
\newblock \emph{SIAM Review}, 38(3), 1996.

\bibitem{bubeck} S. Bubeck
\newblock{Convex Optimization: Algorithms and Complexity}
\newblock Foundations and Trends in Machine Learning Vol. 8, No. 3-4 (2015) 231–357

\newblock{Knapsack Problems}
\newblock Springer 2004, ISBN 978-3-540-24777-7
\bibitem{boyd} S. Boyd, U. Pferschy, L. Vandenberghe
\newblock{Convex Optimization}
\newblock Cambridge University Press 2004



%
%
%
%
%
%
%
%
%
%

\end{thebibliography}

\end{document}